\def\BState{\State\hskip-\ALG@thistlm}
\DeclareMathOperator*{\argmin}{arg\,min}
\DeclareMathOperator*{\argmax}{arg\,max}
\newcommand{\spn}[1]{\mbox{\texttt{span}}\left(#1\right)}
\newcommand{\inprod}[2]{\left\langle #1,#2 \right\rangle}
\newcommand{\bvec}[1]{\bm{#1}}
\newcommand{\real}{\mathbb{R}}
\newcommand{\support}{\mathcal{H}}
\newcommand{\proj}[1]{\mathbf{P}_{#1}}
\newcommand{\dualproj}[1]{\mathbf{P}_{#1}^\perp}
\newcommand{\norm}[1]{\|#1\|_2}
\newcommand{\abs}[1]{\left|#1\right|}
\newtheorem{lem}{Lemma}[section]
\newtheorem{thm}{Theorem}[section]
\title{A Modified  Multiple OLS (m$^2$OLS) Algorithm for Signal Recovery in Compressive Sensing}
\begin{document}
\vspace{-5mm}
\author{Samrat Mukhopadhyay$^1$, {\sl Student Member, IEEE}, Siddhartha Satpathi  $^2$, and Mrityunjoy
Chakraborty$^3$, {\sl Senior Member, IEEE}



\thanks{Authors $^1$ and $^3$ are with the department of Electronics and Electrical Communication
Engineering, Indian Institute of Technology, Kharagpur, INDIA, and author $^2$ is with the department of Electrical Engg., University of Illinois at Urbana Champaign, USA (email : $^1$samratphysics@gmail.com, $^2$sidd.piku@gmail.com, $^3$mrityun@ece.iitkgp.ernet.in).
}}
\IEEEoverridecommandlockouts
\maketitle
\begin{abstract}
Orthogonal least square (OLS) is an important sparse signal recovery algorithm in compressive sensing, which enjoys superior probability of success over other well known recovery algorithms under conditions of correlated measurement matrices. Multiple OLS (mOLS) is a recently proposed improved version of OLS which selects multiple candidates per iteration by generalizing the greedy selection principle used in OLS and enjoys faster convergence than OLS. In this paper, we present a refined version of the mOLS algorithm where at each step of iteration, we first preselect a submatrix of the measurement matrix suitably and then apply the mOLS computations to the chosen submatrix. Since mOLS now works only on a submatrix and  not on the overall matrix, computations reduce drastically. Convergence of the algorithm, however, requires to ensure passage of true candidates through the two stages of preselection and mOLS based selection successively. This paper presents convergence conditions for both noisy and noise free signal models. The proposed algorithm enjoys faster convergence properties similar to mOLS, at a much reduced computational complexity.
\end{abstract}

%
%
%
    %

\begin{IEEEkeywords}
Compressive Sensing, mOLS, restricted isometry property
\end{IEEEkeywords}
\section{Introduction}
\label{sec:intro} Signal recovery in compressive sensing (CS)
requires evaluation of the sparsest solution to an underdetermined
set of equations $\bvec{y}=\bvec{\Phi x}$, where $\bvec{\Phi}\in
\real^{m\times n}\;(m<<n)$ is the so-called measurement matrix and
$\bvec{y}$ is the $m\times 1$ observation vector. It is usually
presumed that the sparsest solution is $K$-sparse, i.e., not more
than $K$ elements of $\bvec{x}$ are non-zero, and also that the
sparsest solution is unique which can be ensured by maintaining
every $2K$ columns of $\bvec{\Phi}$ as linearly independent.
There exist a popular class of algorithms in literature called
greedy algorithms, which obtain the sparsest $\bvec{x}$ by
iteratively constructing the support set of $\bvec{x}$ (i.e., the
set of indices of non-zero elements in $\bvec{x}$) via some greedy
principles. Orthogonal Matching
Pursuit(OMP)~\cite{tropp2007signal} is a prominent algorithm in
this category, which, at each step of iteration, enlarges a
partially constructed support set by appending a column of
$\bvec{\Phi}$ that is most strongly correlated with a residual
vector, and updates the residual vector by projecting $\bvec{y}$
on the column space of the sub-matrix of $\bvec{\Phi}$ indexed by
the updated support set, and then taking the projection error.
Tropp and Gilbert~\cite{tropp2007signal} have shown that OMP can
recover the original sparse vector from a few measurements with
exceedingly high probability when the measurement matrix has i.i.d
Gaussian entries. OMP was extended by Wang \emph{et al}~\cite{wang2012generalized} to the generalized orthogonal matching pursuit (gOMP)where at the indentification stage, multiple columns are selected based on the correlation of the columns of matrix $\bvec{\Phi}$ with the residual vector, which allows gOMP to enjoy faster convergence compared to OMP.

It has, however, been shown recently by  Soussen \emph{et
al}~\cite{soussen2013joint} that the probability of success in OMP
reduces sharply as the correlation between the columns of
$\bvec{\Phi}$ increases, and for measurement matrices with
correlated entries, another greedy algorithm, namely, the
Orthogonal Least Squares (OLS)~\cite{chen1989orthogonal} enjoys
much higher probability of recovery of the sparse signal than OMP.
OLS is computationally similar to OMP except for a more expensive
greedy selection step. Here, at each step of iteration, the
partial support set already evaluated is augmented by an index $i$
which minimizes the energy (i.e., the $l_2$ norm) of the resulting
residual vector.

~An improved version of OLS called multiple OLS (mOLS) has been
proposed recently by  Wang \emph{et al} ~\cite{wang2017recovery},
where unlike OLS, a total of $L$ ($L>1$) indices are appended to
the existing partial support set by suitably generalizing the
greedy principle used in OLS. As $L$ indices are chosen each time,
possibility of selection of multiple ``true'' candidates in each
iteration increases and thus, the probability of convergence in
much fewer iterations than OLS becomes significantly high.

~In this paper, we present a refinement of the mOLS algorithm,
named as modified mOLS (m$^2$OLS), where, at each step of
iteration, we first \textit{pre-select} a total of, say, $N$
columns of $\bvec{\Phi}$ by evaluating the correlation between the
columns of $\bvec{\Phi}$ with the current residual vector and
choosing the $N$ largest (in magnitude) of them. The steps of mOLS
are then applied to this pre-selected set of columns. Here the preselection strategy is identical to the identification strategy of gOMP so that chances of selection of
multiple ``true'' candidates in the pre-selected set is expected to be high. Furthermore, as the mOLS
subsequently works on this preselected set of columns and not on the entire matrix $\bvec{\Phi}$, to determine a subset of $L$ columns ($L<N$), computational costs reduce drastically compared to conventional mOLS. This is
also confirmed by our simulation studies. Derivation of conditions
of convergence for the proposed algorithm is, however, tricky, as
it requires to ensure simultaneous passage of at least one true
candidate from $\bvec{\Phi}$ to the pre-selected set and then,
from the pre-selected set to the mOLS determined subset at every
iteration step. This paper presents convergence conditions of the
proposed algorithm for the cases of both noise free and noisy
observations. It also presents the computational steps of an
efficient implementation of both mOLS and m$^2$OLS, and brings out
the computational superiority of m$^2$OLS over mOLS analytically.
Detailed simulation results in support of the claims made are also
presented.
\section{Preliminaries}
\label{sec:preliminaries}
The following notations have been used throughout the paper :`$t$' in
superscript indicates transposition of matrices / vectors.%
$\bvec{\Phi}\in \mathbb{R}^{m\times n}$ denotes the measurement
matrix ($m<n$) and the $i$ th column of $\bvec{\Phi}$ is denoted
by $\bm{\phi}_i,\ i=1,2,\cdots,\ n$. All the columns of
$\bvec{\Phi}$ are assumed to have unit $l_2$ norm, i.e.,
$\|\bm{\phi}_i\|_2=1$, which is a common assumption in the
literature~\cite{tropp2007signal},~\cite{wang2017recovery}.
$\support$ denotes the set of all the indices $\{1,2,\cdots,\
n\}$. $K$ indicates the sparsity level of $\bvec{x}$, i.e., not
more than $K$ elements of $\bvec{x}$ are non-zero. $T$ denotes the
true support set of $\bvec{x}$, i.e., $T=\{i\in
\support|[\bvec{x}]_i\ne 0\}.$ For any $S\subseteq \support$,
$\bvec{x}_S$ denotes the vector $\bvec{x}$ restricted to $S$,
i.e., $\bvec{x}_S$ consists of those entries of $\bvec{x}$ that
have indices belonging to $S$. Similarly, $\bvec{\Phi}_S$ denotes
the submatrix of $\bvec{\Phi}$ formed with the columns of
$\bvec{\Phi}$ restricted to the index set $S$. If $\bvec{\Phi}_S$
has full column rank of $|S|$ ($|S|<m$), then the Moore-Penrose
pseudo-inverse of $\bvec{\Phi}_{S}$ is given by
$\bvec{\Phi}_{S}^\dagger=(\bvec{\Phi}_{S}^t\bvec{\Phi}_{S})^{-1}\bvec{\Phi}_{S}^t$.
$\proj{S}=\bvec{\Phi}_{S}\bvec{\Phi}_{S}^\dagger$ denotes the
orthogonal projection operator associated with
$span(\bvec{\Phi}_{S})$ and $\dualproj{S}=\bvec{I}-\proj{S}$
denotes the orthogonal projection operator on the orthogonal
complement of $span(\bvec{\Phi}_{S})$. For any set $S\subseteq
\mathcal{H}$, the matrix $\dualproj{S}\bvec{\Phi}$ is denoted by
$\bvec{A}_S$. For a given sparsity order $K$ and a given matrix
$\bvec{\Phi}$, it can be shown that there exists a real, positive
constant $\delta_K$ such that $\bvec{\Phi}$ satisfies the
following ``Restricted Isometry Property (RIP)'' for all
$K$-sparse $\bvec{x}$ :
\begin{align*}
(1-\delta_K)\norm{\bvec{x}}^2\le \norm{\bvec{\Phi x}}^2\le (1+\delta_K)\norm{\bvec{x}}^2.
\end{align*}
The constant $\delta_K$ is called the restricted isometry constant
(RIC)\cite{candes2006robust} of the matrix $\bvec{\Phi}$ for order $K$. Clearly, it is
the minimum such constant for which the RIP is satisfied. Note
that if $\delta_K<1$, $\bvec{x}\ne \bvec{0}$ for a $K$-sparse
$\bvec{x}$ implies $\norm{\bvec{\Phi x}}\ne 0$ and thus,
$\bvec{\Phi x}\ne \bvec{0}$, meaning every $K$ columns of
$\bvec{\Phi}$ are linearly independent. The RIC gives a measure of
near unitariness of $\bvec{\Phi}$ (smaller the RIC is, closer
$\bvec{\Phi}$ will be to being unitary). Convergence conditions of
recovery algorithms in CS are usually given in terms of upper
bounds on the RIC.
\section{Proposed Algorithm}
\begin{table}[ht!]
\centering
\begin{tabular}{p{10cm}}
\centering
\hrulefill
\begin{description}
\item[\textbf{Input:}]\ measurement vector $\bvec{y}\in \real^m$,
sensing matrix $\bvec{\Phi}\in \real^{m\times n}$; sparsity level
$K$; number of indices preselected $N$; number of indices chosen
in identification step, $L(L\le N,\ L\le K)$, prespecified
residual threshold $\epsilon$; \item[\textbf{Initialize:}]$\quad$
counter $k=0$, residue $\bvec{r}^0=\bvec{y}$, estimated support
set, $T^0=\emptyset$, set selected by preselection step
$S^0=\emptyset$, \item[\textbf{While}]($\norm{\bvec{r}^k}\ge
\epsilon\ \mbox{and}\ \ k<K$)
\item[]\  $k=k+1$ \item[]\  {\emph{Preselect:}} $\displaystyle
S^k$ is the set containing indices corresponding to the $N$
largest absolute entries of $\bvec{\Phi}^t \bvec{r}^{k-1}$
\item[]{\emph{Identify:}} $\displaystyle
h^k=\argmin_{\Lambda\subset S^k:|\Lambda|=L}\sum_{i\in
\Lambda}\|\dualproj{T^{k-1}\cup \{i\}}\bvec{y}\|_2^2$ \item[]\
{\emph{Augment:}} $T^k=T^{k-1}\cup h^k$ \item[]\
{\emph{Estimate:}} $\displaystyle
\bvec{x}^k=\argmin_{\bvec{u}:\bvec{u}\in \real^n,\ supp(\bvec{u})=
T^k}\|\bvec{y}-\bvec{\Phi}\bvec{u}\|_2$  \item[]\ \emph{Update:}
$\bvec{r}^k=\bvec{y}-\bvec{\Phi}\bvec{x}^k$\\ \item[]\ (Note :
Computation of $\bvec{x}^k$ for $1\le k \le K$ requires every $LK$
columns of $\bvec{\Phi}$ to be linearly independent which is
guaranteed by the proposed RIC bound)
\item[\textbf{End While}]
\end{description}
\hrulefill
\begin{description}
\item[\textbf{Output:}]$\quad$  estimated support set
$\displaystyle
\hat{T}=\argmax_{\Lambda:|\Lambda|=K}\|\bvec{x}^k_{\Lambda}\|_2$
and $K$-sparse signal $\hat{\bvec{x}}$ satisfying
$\hat{\bvec{x}}_{\hat{T}}=\bvec{\Phi}_{\hat{T}}^\dagger \bvec{y},\
\hat{\bvec{x}}_{\support\setminus\hat{T}}=\mathbf{0}$
\end{description}
\hrulefill
\caption{Proposed m$^2$OLS \textsc{Algorithm}}
\label{tab:m$^2$OLS}
\end{tabular}
\end{table}
The proposed m$^2$OLS algorithm is described in
Table.~\ref{tab:m$^2$OLS}. At any $k$-th step of iteration ($k\ge
1$), assume a residual signal vector $\bvec{r}^{k-1}$ and a
partially constructed support set $T^{k-1}$ have already been
computed ($\bvec{r}^{0}=\bvec{y}$ and $T^0=\emptyset$). In the
\textit{preselection} stage, $N$ columns of $\bvec{\Phi}$ are
identified that have largest (in magnitude) correlations with
$\bvec{r}^{k-1}$ by picking up the $N$ largest absolute entries of
$\bvec{\Phi}^t \bvec{r}^{k-1}$, and the set $S^k$ containing the
corresponding indices is selected. This is followed by the
\textit{identification} stage, where $\sum_{i\in
\Lambda}\|\dualproj{T^{k-1}\cup \{i\}}\bvec{y}\|_2^2$ is evaluated
for all subsets $\Lambda$ of $S^k$ having $L$ elements, and
selecting the subset $h^k$ for which this is minimum. This is the
greedy selection stage, which is carried out in practice
~\cite{wang2017recovery} by computing $\frac{|\bm{\phi}_i^t
\bvec{r}^{k-1}|}{\|\dualproj{T^{k-1}}\bm{\phi}_i\|_2}$ for all
$i\in S^k$ and selecting the indices corresponding to the $L$
largest of them. The partial support set is then updated to $T^k$
by taking set union of $T^{k-1}$ and $h^k$, and the residual
vector is updated to $\bvec{r}^k$ by computing
$\dualproj{T^{k}}\bvec{y}$.

Note that in conventional mOLS algorithm, at a $k$-th step of
iteration ($k\ge 1$), one has to compute
$\frac{|\bm{\phi}_i^t\bvec{r}^{k-1}|}{\|\dualproj{T^{k-1}}\bm{\phi}_i\|_2}$
for all $i \in \support\setminus T^{k-1}$, involving a total of
$n-(k-1)L$ columns, i.e., $\bm{\phi}_i$'s. In contrast, in the
proposed m$^2$OLS algorithm, the above computation is restricted
only to the preselected set of $N$ elements, which results in
significant reduction of computational complexity.
\subsection{Lemmas (Existing)}
The following lemmas will be useful for the analysis of the proposed algorithm.
\begin{lem}[Monotonicity, Lemma 1 of~\cite{dai2009subspace}]
\label{lem:monotonicity}
If a measurement matrix satisfies RIP of orders $K_1,K_2$ and $K_1\le K_2$, then $\delta_{K_1}\le \delta_{K_2}$.
\end{lem}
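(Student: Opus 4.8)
The plan is to use the elementary observation that every $K_1$-sparse vector is also $K_2$-sparse when $K_1\le K_2$, combined with the fact that $\delta_K$ is, by definition, the \emph{smallest} constant making the two-sided RIP bound hold over all $K$-sparse vectors. In other words, I want to show that the constant certifying RIP of order $K_2$ automatically certifies RIP of order $K_1$, and then invoke minimality to conclude the inequality between the two restricted isometry constants.

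Concretely, I would first fix an arbitrary $\bvec{x}\in\real^n$ with $\|\bvec{x}\|_0\le K_1$. Since $K_1\le K_2$, this $\bvec{x}$ also satisfies $\|\bvec{x}\|_0\le K_2$, so applying the RIP of order $K_2$ gives $(1-\delta_{K_2})\norm{\bvec{x}}^2\le\norm{\bvec{\Phi x}}^2\le(1+\delta_{K_2})\norm{\bvec{x}}^2$. As $\bvec{x}$ was an arbitrary $K_1$-sparse vector, this shows that $\delta_{K_2}$ is itself a valid restricted isometry constant of order $K_1$. But $\delta_{K_1}$ is by definition the smallest such constant, hence $\delta_{K_1}\le\delta_{K_2}$.

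The argument is entirely routine and I do not expect any genuine obstacle; the only point requiring a little care is to phrase the conclusion through the \emph{minimality} of the restricted isometry constant, so that ``a valid constant of order $K_1$'' is upgraded to the inequality $\delta_{K_1}\le\delta_{K_2}$. An equivalent packaging of the same idea is to write $\delta_K=\sup\bigl\{\,\bigl|\norm{\bvec{\Phi x}}^2-\norm{\bvec{x}}^2\bigr|/\norm{\bvec{x}}^2 : \bvec{x}\neq\mathbf{0},\ \|\bvec{x}\|_0\le K\,\bigr\}$ and observe that increasing $K$ enlarges the feasible set over which the supremum is taken, which can only increase its value.
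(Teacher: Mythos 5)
Your proof is correct: the nesting of $K_1$-sparse vectors inside $K_2$-sparse vectors, together with the minimality (equivalently, the supremum characterization) of the restricted isometry constant, gives $\delta_{K_1}\le\delta_{K_2}$ exactly as claimed. The paper itself offers no proof of this lemma, simply citing Lemma 1 of the referenced subspace pursuit work, and your argument is the standard one used there, so there is nothing to reconcile and no gap to flag.
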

\begin{lem}[Consequence of RIP~\cite{needell2009cosamp}]
\label{lem:max-min-eigenvalues-phi_t*phi}
For any subset $\Lambda\subseteq \mathcal{H}$, and for any vector $\bvec{u}\in \real^n$,\begin{align*}
(1-\delta_{|\Lambda|})\norm{\bvec{u}_\Lambda}\le \norm{\bvec{\Phi}^t_\Lambda\bvec{\Phi}_\Lambda\bvec{u}_\Lambda}\le (1+\delta_{|\Lambda|})\norm{\bvec{u}_\Lambda}.
\end{align*}
\end{lem}
\begin{lem}[Proposition 3.1 in~\cite{needell2009cosamp}]
\label{lem:upper-bound-phi^t-u}
For any $\Lambda\subseteq \support$, and for any vector $\bvec{u}\in \real^m$\begin{align*}
\norm{\bvec{\Phi}_\Lambda^t\bvec{u}}\le \sqrt{1+\delta_{|\Lambda|}}\norm{\bvec{u}}.
\end{align*}
\end{lem}
%

%
%
\begin{lem}[Lemma 1 of \cite{dai2009subspace}]
\label{lem:orthogonal-sparse1} If $\bvec{x}\in \real^{n}$ is a
vector with support $S_1$, and $S_1\cap S_2=\emptyset$, then,
$$\norm{\bvec{\Phi}_{S_2}^t\bvec{\Phi}\bvec{x}}\le
\delta_{|S_1|+|S_2|} \norm{\bvec{x}}.$$
\end{lem}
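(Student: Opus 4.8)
The plan is to reduce the statement to the familiar consequence of RIP that a matrix obeying the restricted isometry property of order $|S_1|+|S_2|$ nearly preserves the orthogonality of two vectors whose supports are disjoint and lie in $S_1$ and $S_2$. (I read the hypothesis as ``$\bvec{x}$ has support $S_2$'', since as written $\bvec{\Phi}_{S_1}^t\bvec{\Phi}\bvec{x}$ does not involve $S_2$; I would state and use the lemma in that form.) First I would set $\bvec{z}=\bvec{\Phi}_{S_1}^t\bvec{\Phi}\bvec{x}=\bvec{\Phi}_{S_1}^t\bvec{\Phi}_{S_2}\bvec{x}_{S_2}$; if $\bvec{z}=\bvec{0}$ the bound is trivial, so assume $\bvec{z}\neq\bvec{0}$ and let $\bvec{u}\in\real^n$ be the unit vector supported on $S_1$ with $\bvec{u}_{S_1}=\bvec{z}/\norm{\bvec{z}}$. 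Then $\norm{\bvec{z}}=\bvec{u}_{S_1}^t\bvec{z}=(\bvec{\Phi}\bvec{u})^t(\bvec{\Phi}\bvec{x})=\inprod{\bvec{\Phi}\bvec{u}}{\bvec{\Phi}\bvec{x}}$, so everything comes down to bounding this inner product.

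The key step is the inequality $\abs{\inprod{\bvec{\Phi}\bvec{a}}{\bvec{\Phi}\bvec{b}}}\le\delta_{|S_1|+|S_2|}\,\norm{\bvec{a}}\,\norm{\bvec{b}}$, valid for every $\bvec{a}$ supported on $S_1$ and $\bvec{b}$ supported on $S_2$. By scaling it suffices to prove it for $\norm{\bvec{a}}=\norm{\bvec{b}}=1$. Since the supports are disjoint, $\bvec{a}\pm\bvec{b}$ has at most $|S_1|+|S_2|$ nonzero entries and $\norm{\bvec{a}\pm\bvec{b}}^2=\norm{\bvec{a}}^2+\norm{\bvec{b}}^2=2$. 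Applying the two RIP inequalities of order $|S_1|+|S_2|$ to $\bvec{a}+\bvec{b}$ and $\bvec{a}-\bvec{b}$ respectively and combining them through the polarization identity $\inprod{\bvec{\Phi}\bvec{a}}{\bvec{\Phi}\bvec{b}}=\frac14\left(\norm{\bvec{\Phi}(\bvec{a}+\bvec{b})}^2-\norm{\bvec{\Phi}(\bvec{a}-\bvec{b})}^2\right)$ gives $\inprod{\bvec{\Phi}\bvec{a}}{\bvec{\Phi}\bvec{b}}\le\frac14\big(2(1+\delta_{|S_1|+|S_2|})-2(1-\delta_{|S_1|+|S_2|})\big)=\delta_{|S_1|+|S_2|}$; replacing $\bvec{b}$ by $-\bvec{b}$ supplies the matching lower bound, so the absolute-value version holds.

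Finally I would specialize this bound to $\bvec{a}=\bvec{u}$ and $\bvec{b}=\bvec{x}$, which yields $\norm{\bvec{\Phi}_{S_1}^t\bvec{\Phi}\bvec{x}}=\norm{\bvec{z}}=\inprod{\bvec{\Phi}\bvec{u}}{\bvec{\Phi}\bvec{x}}\le\delta_{|S_1|+|S_2|}\,\norm{\bvec{u}}\,\norm{\bvec{x}}=\delta_{|S_1|+|S_2|}\,\norm{\bvec{x}}$, exactly the asserted estimate. The only part that needs genuine work is the polarization argument extracting the near-orthogonality inequality from the RIP bounds; constructing the right unit test vector $\bvec{u}$ and the final substitution are routine, so I expect no real obstacle beyond writing that one step carefully.
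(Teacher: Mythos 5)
Your proof is correct, and you were right to flag the hypothesis: as printed the lemma is mis-stated (if $\bvec{x}$ were supported on $S_1$ the bound would fail, since $\bvec{\Phi}_{S_1}^t\bvec{\Phi}_{S_1}$ has eigenvalues near $1$); the intended statement, as in Lemma 1 of the cited Dai--Milenkovic reference, is that $\bvec{x}$ is supported on $S_2$ with $S_1\cap S_2=\emptyset$, which is exactly the form you prove and the form in which the paper actually uses it (e.g.\ in bounding $\|\bvec{\Phi}^t_{W^{k+1}\setminus T^k}\bvec{\Phi}_{T\cup T^k}\bvec{x}'_{T\cup T^k}\|_2$). Note that the paper itself gives no proof of this lemma --- it is imported verbatim from \cite{dai2009subspace} --- so there is no internal argument to compare against; your dualization via the unit test vector $\bvec{u}$ supported on $S_1$, followed by the polarization bound $\abs{\inprod{\bvec{\Phi}\bvec{a}}{\bvec{\Phi}\bvec{b}}}\le\delta_{|S_1|+|S_2|}\norm{\bvec{a}}\norm{\bvec{b}}$ for disjointly supported $\bvec{a},\bvec{b}$ (using $\norm{\bvec{a}\pm\bvec{b}}^2=2$ and the RIP of order $|S_1|+|S_2|$), is the classical Cand\`es--Tao route. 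The cited reference reaches the same conclusion slightly differently, by observing that $\bvec{\Phi}_{S_1}^t\bvec{\Phi}_{S_2}$ is an off-diagonal block of $\bvec{\Phi}_{S_1\cup S_2}^t\bvec{\Phi}_{S_1\cup S_2}-\bvec{I}$, whose spectral norm is at most $\delta_{|S_1|+|S_2|}$; the two arguments are equivalent in strength, yours being marginally more self-contained (no appeal to spectral norms of submatrices), theirs slightly shorter once that operator-norm fact is available. Either way, the argument is complete and the constant is exactly the one claimed.
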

\begin{lem}[Lemma 3 of ~\cite{satpathi2013improving}]
\label{lem:orthogonal-sparse2} If $I_1,I_2\subset \support$ such
that $I_1\cap I_2=\emptyset$ and $\delta_{|I_2|}<1$, then,
$\forall \bvec{u}\in \real^n$ such that $supp(\bvec{u})\subseteq
I_2$,
\begin{align*}
\left(1-\left(\frac{\delta_{|I_1|+|I_2|}}{1-\delta_{|I_1|+|I_2|}}\right)^2\right)\norm{\bvec{\Phi}\bvec{u}}^2\le
\norm{\bvec{A}_{I_1}\bvec{u}}^2\le
(1+\delta_{|I_1|+|I_2|})\norm{\bvec{\Phi u}}^2,
\end{align*}
and,
\begin{align*}
\left(1-\frac{\delta_{|I_1|+|I_2|}}{1-\delta_{|I_1|+|I_2|}}\right)\norm{\bvec{u}}^2\le \norm{\bvec{A}_{I_1}\bvec{u}}^2\le (1+\delta_{|I_1|+|I_2|})\norm{\bvec{u}}^2.
\end{align*}
\end{lem}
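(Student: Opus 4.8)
The plan is to reduce all four inequalities to one elementary identity combined with the RIP lemmas already available. Unwinding the definition, $\bvec{A}_{I_1}\bvec{u}=\dualproj{I_1}\bvec{\Phi}\bvec{u}=\bvec{\Phi}\bvec{u}-\proj{I_1}\bvec{\Phi}\bvec{u}$, so since $\proj{I_1}$ and $\dualproj{I_1}$ are complementary orthogonal projections, the Pythagorean theorem gives
\begin{align*}
\norm{\bvec{A}_{I_1}\bvec{u}}^2=\norm{\bvec{\Phi}\bvec{u}}^2-\norm{\proj{I_1}\bvec{\Phi}\bvec{u}}^2 .
\end{align*}
Write $\delta=\delta_{|I_1|+|I_2|}$ throughout; since $supp(\bvec{u})\subset I_2$, the vector $\bvec{u}$ is $|I_2|$-sparse, and Lemma~\ref{lem:monotonicity} lets me inflate every RIP constant of smaller order to $\delta$ before combining terms. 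The two upper bounds are then immediate from the identity: $\norm{\bvec{A}_{I_1}\bvec{u}}^2\le\norm{\bvec{\Phi}\bvec{u}}^2\le(1+\delta)\norm{\bvec{\Phi}\bvec{u}}^2$, and likewise $\norm{\bvec{A}_{I_1}\bvec{u}}^2\le\norm{\bvec{\Phi}\bvec{u}}^2\le(1+\delta)\norm{\bvec{u}}^2$.

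For the lower bounds the only real work is to bound the discarded energy $\norm{\proj{I_1}\bvec{\Phi}\bvec{u}}^2$ from above. Using $\proj{I_1}=\bvec{\Phi}_{I_1}(\bvec{\Phi}_{I_1}^t\bvec{\Phi}_{I_1})^{-1}\bvec{\Phi}_{I_1}^t$ and idempotence of the projection, $\norm{\proj{I_1}\bvec{\Phi}\bvec{u}}^2=\bvec{w}^t(\bvec{\Phi}_{I_1}^t\bvec{\Phi}_{I_1})^{-1}\bvec{w}$ with $\bvec{w}=\bvec{\Phi}_{I_1}^t\bvec{\Phi}\bvec{u}$. Since $I_1\cap I_2=\emptyset$ and $supp(\bvec{u})\subset I_2$, Lemma~\ref{lem:orthogonal-sparse1} gives $\norm{\bvec{w}}\le\delta\norm{\bvec{u}}$; and since RIP of order $|I_1|$ confines the eigenvalues of $\bvec{\Phi}_{I_1}^t\bvec{\Phi}_{I_1}$ to $[1-\delta,1+\delta]$, the largest eigenvalue of its inverse is at most $(1-\delta)^{-1}$ (this is also where $\delta<1$ is needed, so that $\bvec{\Phi}_{I_1}$ has full column rank and the inverse exists). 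Hence $\norm{\proj{I_1}\bvec{\Phi}\bvec{u}}^2\le\frac{\delta^2}{1-\delta}\norm{\bvec{u}}^2$, so that
\begin{align*}
\norm{\bvec{A}_{I_1}\bvec{u}}^2\ge\norm{\bvec{\Phi}\bvec{u}}^2-\frac{\delta^2}{1-\delta}\norm{\bvec{u}}^2 .
\end{align*}

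Both remaining inequalities now fall out by inserting the two-sided RIP comparison between $\norm{\bvec{u}}^2$ and $\norm{\bvec{\Phi}\bvec{u}}^2$ into this last bound. Using $\norm{\bvec{\Phi}\bvec{u}}^2\ge(1-\delta)\norm{\bvec{u}}^2$ yields $\norm{\bvec{A}_{I_1}\bvec{u}}^2\ge\left(1-\delta-\frac{\delta^2}{1-\delta}\right)\norm{\bvec{u}}^2=\left(1-\frac{\delta}{1-\delta}\right)\norm{\bvec{u}}^2$, and instead using $\norm{\bvec{u}}^2\le(1-\delta)^{-1}\norm{\bvec{\Phi}\bvec{u}}^2$ yields $\norm{\bvec{A}_{I_1}\bvec{u}}^2\ge\left(1-\frac{\delta^2}{(1-\delta)^2}\right)\norm{\bvec{\Phi}\bvec{u}}^2$, which is exactly the claimed factor $1-\left(\frac{\delta}{1-\delta}\right)^2$. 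I do not anticipate a genuine obstacle here; the argument is basically bookkeeping, and the two points that need care are (i) invoking Lemma~\ref{lem:monotonicity} to raise every RIP constant to order $|I_1|+|I_2|$ before it is combined with another, and (ii) tracking whether each quantity is currently expressed in terms of $\norm{\bvec{u}}$ or of $\norm{\bvec{\Phi}\bvec{u}}$.
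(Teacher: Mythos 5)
Your proof is correct, and it is worth noting that the paper itself gives no proof of this lemma: it is imported by citation as Lemma~3 of the reference, so there is nothing internal to compare against line by line. Your argument is essentially the standard one used in that line of work: the Pythagorean identity $\norm{\bvec{A}_{I_1}\bvec{u}}^2=\norm{\bvec{\Phi u}}^2-\norm{\proj{I_1}\bvec{\Phi u}}^2$, the cross-correlation bound $\norm{\bvec{\Phi}_{I_1}^t\bvec{\Phi u}}\le\delta\norm{\bvec{u}}$ for disjoint supports, and the eigenvalue bound $\lambda_{\max}\bigl((\bvec{\Phi}_{I_1}^t\bvec{\Phi}_{I_1})^{-1}\bigr)\le(1-\delta)^{-1}$, followed by the two-sided RIP comparison of $\norm{\bvec{u}}$ and $\norm{\bvec{\Phi u}}$. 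The algebra closing the argument checks out: $1-\delta-\frac{\delta^2}{1-\delta}=\frac{1-2\delta}{1-\delta}=1-\frac{\delta}{1-\delta}$ and $1-\frac{\delta^2}{(1-\delta)^2}=1-\bigl(\frac{\delta}{1-\delta}\bigr)^2$, so both lower bounds come out exactly as stated, and the upper bounds are immediate (indeed you prove the slightly stronger bound with constant $1$ in place of $1+\delta$ for the first chain). Two remarks on hygiene rather than substance: you correctly read $\bvec{A}_{I_1}$ as $\dualproj{I_1}\bvec{\Phi}$, which is what the paper's later usage ($\bvec{A}_{T^k}\bvec{e}_i=\dualproj{T^k}\bvec{\phi}_i$) requires, even though the paper's notation section literally writes $\dualproj{\Lambda}\bvec{\Phi}_\Lambda$ (which would be zero); and your appeal to Lemma~\ref{lem:orthogonal-sparse1} uses its standard form (vector supported on a set disjoint from $S_1$), which again is what the paper intends despite the slip in how that lemma's hypothesis is typeset. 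Neither issue affects the validity of your proof.
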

\section{Signal recovery using m$^2$OLS algorithm}
\label{sec:signal-recovery-theoretical-conditions} In this
section, we obtain convergence conditions for the proposed
m$^2$OLS algorithm. In particular, we derive conditions for
selection of at least one correct index at each iteration, which
guarantees recovery of a $K$-sparse signal by the m$^2$OLS
algorithm in a maximum of $K$ iterations.

Unlike mOLS, proving convergence is, however, trickier in the
proposed m$^2$OLS algorithm because of the presence of two
selection stages at every iteration, namely, preselection and
identification. In order that the proposed algorithm converges in
$K$ steps or less, it is essential to ensure that at each step of
iteration, at least one true support index $i$ first gets selected
in $S^k$ and then, gets passed on from $S^k$ to $h^k$. In the
following, we present the convergence conditions for m$^2$OLS in
two cases, with and without the presence of measurement noise. For
the noiseless measurement model the measurement vector $\bvec{y}$
satisfies $\bvec{y=\Phi x}$, with a \emph{unique} $K$-sparse
vector $\bvec{x}$. For the noisy measurement model, the
measurement vector is assumed to be contaminated by an additive
noise vector, i.e., $\bvec{y=\Phi x+e}$. The convergence
conditions for noiseless and noisy cases are given in Theorems
~\ref{thm:noiseless-recovery} and Theorem~\ref{thm:noisy-recovery}
below. Both these theorems use Lemma~\ref{lem:identification},
which in turn uses the following definition : $\tilde{T}^K=\{i\in
H|\bvec{\phi_i}\in span(\bvec{\Phi}_{T^k})\}$. Note that
$T^k\subseteq \tilde{T}^K$ and for $i\in \tilde{T}^K$,
$\|\dualproj{T^k}\bvec{\phi_i}\|_2=0$,
$\inprod{\bvec{\phi_i}}{\bvec{r}^k}=0$. It should be mentioned that the first result of Lemma~\ref{lem:identification} is not any new result, and similar result has already been discussed in the context of OLS~\cite{soussen2013joint},~\cite{rebollo2002optimized}, and mOLS~\cite{wang2017recovery}. However, we provide an additional observation in the following lemma that concerns the identification step of m$^2$OLS.
\begin{lem}
\label{lem:identification} At the $(k+1)th$ iteration, the
identification step chooses the set \begin{align*}
h^{k+1}=&\argmax_{\Lambda:\Lambda\subset
S^{k+1},|\Lambda|=L}\sum_{i\in \Lambda}a_i^2,
\end{align*}
\ \\
where $a_i=
\frac{|\inprod{\bvec{\phi_i}}{\bvec{r}^k}|}{\|\dualproj{T^k}\bvec{\phi_i}\|_2}$
if $i\in S^{k+1}\setminus \tilde{T}^K$, and $a_i=0$ for $i\in
S^{k+1}\cap \tilde{T}^K$. Further, if
\begin{align*}
g^{k+1}=&\argmax_{\Lambda:\Lambda\subset
S^{k+1},|\Lambda|=L}\sum_{i\in \Lambda}a_i,
\end{align*}
\ \\
then, $\sum_{i\in h^{k+1}}a_i=\sum_{i\in g^{k+1}}a_i$.
\begin{proof}
The first part of this lemma is a direct consequence of
Proposition 1 of~\cite{wang2017recovery}. For the second part, let
$l\in h^{k+1}$ be an index, so that, $a_l\le a_r,\, \forall r\in
h^{k+1}$ (i.e. $a_l=\min\{a_r|\ r\in h^{k+1}\}$). Clearly, $a_l\ge
a_j\ \forall j\in S^{k+1}\setminus h^{k+1}$, as otherwise, if
$\exists\ a_j\in S^{k+1}\setminus h^{k+1}$ so that $a_l <a_j$, we
have $a_l^2<a_j^2$. Then constructing the set $H^{k+1}$ as
$H^{k+1}=h^{k+1}\cup \{j\}\setminus \{l\}$, we have, $\sum_{i\in
h^{k+1}}a_i^2<\sum_{i\in H^{k+1}}a_i^2$, which is a contradiction.
~The above means that $\forall i\in h^{k+1}$, $a_i\ge a_j,\
\forall j\in S^{k+1}\setminus h^{k+1}$. Thus, for any $S\subseteq
S^{k+1},\ \abs{S}=L$, $\sum_{i\in h^{k+1}}a_i\ge \sum_{i\in
S}a_i$, and thus, $\sum_{i\in h^{k+1}}a_i\ge \sum_{i\in
g^{k+1}}a_i$. Again, from the definition of $g^{k+1}$, $\sum_{i\in
g^{k+1}}a_i\ge \sum_{i\in h^{k+1}}a_i$. This proves the desired
equality.
\end{proof}
\end{lem}
Our main results regarding the signal recovery performance of the m$^2$OLS algorithm is stated in the following two theorems.
\begin{thm}
\label{thm:noiseless-recovery} The m$^2$OLS algorithm can recover
a $K$ sparse vector $\bvec{x}\in \real^n$ perfectly from the
measurement vector $\bvec{y}=\bvec{\Phi x},\ \bvec{y}\in \real^m,\
m<n$ within $K$ iterations, if
\begin{align}
\delta_{LK+N-L+1}<\frac{\sqrt{L}}{\sqrt{K+L}+\sqrt{L}}
\label{eq:final-condition2}
\end{align} is satisfied by matrix $\bvec{\Phi}$.
\end{thm}
\begin{proof}
Given in Appendix~\ref{sec:appendix-proof-thms}.
\end{proof}
To describe recovery performance of m$^2$OLS in presence of noise,
we use the following performance measures
~\cite{wang2017recovery}:
\begin{itemize}
\item $snr:=\frac{\norm{\bvec{\Phi x}}^2}{\norm{\bvec{e}}^2}$,
\item minimum-to-average-ratio (MAR)~\cite{fletcher2012orthogonal}, $\kappa=\frac{\min_{j\in T}|x_j|}{\norm{\bvec{x}}/\sqrt{K}}$.
\end{itemize}
\begin{thm}
\label{thm:noisy-recovery} Under the noisy measurement model,
m$^2$OLS is guaranteed to collect all the indices of the the true
support set $T$ within $K$ iterations, if the sensing matrix
$\bvec{\Phi}$ satisfies equation~\eqref{eq:final-condition2} and
the $snr$ satisfies the following condition:
\begin{align}
\label{eq:noisy-recovery-condition}
\sqrt{snr}\;>\frac{(1+\delta_{R})(\sqrt{L}+\sqrt{K})\sqrt{K}}{\kappa\left(\sqrt{L(1-2\delta_{R})}-\delta_{R}\sqrt{K}\right)},
\end{align}
where $R = LK + N - L + 1$.
\end{thm}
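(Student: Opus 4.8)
The plan is to re-run the per-iteration success analysis of the noiseless section, carrying an additive noise term through every inequality. Write $\delta:=\delta_{LK+N}$. The structural fact that makes this work is that the residual splits as $\bvec{r}^k=\dualproj{T^k}\bvec{y}=\dualproj{T^k}\bvec{\Phi}_T\bvec{x}_T+\dualproj{T^k}\bvec{e}=\bvec{\Phi}_{T\cup T^k}\bvec{x}'_{T\cup T^k}+\dualproj{T^k}\bvec{e}$, where $\bvec{x}'_{T\cup T^k}$ is exactly the vector of \eqref{eq:alternate-dualproj-x-representation} and $\norm{\dualproj{T^k}\bvec{e}}\le\norm{\bvec{e}}$. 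Thus the ``signal part'' $\bvec{\Phi}_{T\cup T^k}\bvec{x}'_{T\cup T^k}$ is literally the noiseless residual, so every RIP estimate derived in the noiseless analysis applies to it verbatim, while each cross term obeys $\norm{\bvec{\Phi}_S^t\dualproj{T^k}\bvec{e}}\le\sqrt{1+\delta_{\abs{S}}}\,\norm{\bvec{e}}\le\sqrt{1+\delta}\,\norm{\bvec{e}}$. I also record two scalar bounds that pass everything to $snr$: from the definition of MAR, $\norm{\bvec{x}'_{T\cup T^k}}\ge\norm{\bvec{x}_{T\setminus T^k}}\ge\sqrt{K-c_k}\,\kappa\norm{\bvec{x}}/\sqrt{K}\ge\kappa\norm{\bvec{x}}/\sqrt{K}$ (since $c_k<K$), and from RIP and the definition of $snr$, $\norm{\bvec{e}}=\norm{\bvec{\Phi x}}/\sqrt{snr}\le\sqrt{1+\delta}\,\norm{\bvec{x}}/\sqrt{snr}$.

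I argue by induction on the iteration index, with the induction hypothesis used in the noiseless case (each of iterations $1,\dots,k$ captured an index of $T$, so $c_k=\abs{T\cap T^k}\ge k$, and $c_k<K$); the base case $k=0$ is the same argument, with $\bvec{r}^0=\bvec{y}$ and $T^0=\emptyset$. For the \emph{preselection} step, bounding $\beta_1^k$ and $\alpha_N^k$ of \eqref{eq:beta_def}--\eqref{eq:alpha-def} as in the noiseless analysis but inserting a triangle inequality gives $\sqrt{\beta_1^k}\ge\tfrac{1}{\sqrt{K}}\bigl((1-\delta)\norm{\bvec{x}'}-\sqrt{1+\delta}\norm{\bvec{e}}\bigr)$ and $\sqrt{\alpha_N^k}\le\tfrac{1}{\sqrt{N}}\bigl(\delta\norm{\bvec{x}'}+\sqrt{1+\delta}\norm{\bvec{e}}\bigr)$, so $\beta_1^k>\alpha_N^k$ (i.e.\ $S^{k+1}\cap T\ne\emptyset$) is guaranteed once $\norm{\bvec{x}'}/\norm{\bvec{e}}>(\sqrt{N}+\sqrt{K})\sqrt{1+\delta}\,/\,(\sqrt{N}(1-\delta)-\sqrt{K}\delta)$. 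For the \emph{identification} step, Lemma~\ref{lem:identification} together with the same decomposition and the lower bound $\norm{\dualproj{T^k}\phi_i}^2\ge 1-x^2=(1-2\delta)/(1-\delta)^2$ (from Lemma~\ref{lem:orthogonal-sparse2}, with $x:=\delta/(1-\delta)$) gives $u_1\ge\tfrac{1}{\sqrt{K}}\bigl((1-\delta)\norm{\bvec{x}'}-\sqrt{1+\delta}\norm{\bvec{e}}\bigr)$ and $v_L\le\tfrac{1-\delta}{\sqrt{L(1-2\delta)}}\bigl(\delta\norm{\bvec{x}'}+\sqrt{1+\delta}\norm{\bvec{e}}\bigr)$; cancelling the common factor $1-\delta$ and then using the slack bounds $\sqrt{1-2\delta}\le 1-\delta$ and $\sqrt{L}\le\sqrt{N}$, the requirement $u_1>v_L$ (i.e.\ $h^{k+1}\cap T\ne\emptyset$) reduces to $\norm{\bvec{x}'}/\norm{\bvec{e}}>(\sqrt{N}+\sqrt{K})\sqrt{1+\delta}\,/\,(\sqrt{L(1-2\delta)}-\sqrt{K}\delta)$.

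It remains to verify that the stated $snr$ hypothesis supplies both thresholds. Since $N\ge L$ and $(1-\delta)^2\ge 1-2\delta$, one has $\sqrt{N}(1-\delta)\ge\sqrt{L(1-2\delta)}$, so the identification threshold is the stronger one and it suffices to enforce it; substituting the two scalar bounds of the first paragraph into $\norm{\bvec{x}'}/\norm{\bvec{e}}>(\sqrt{N}+\sqrt{K})\sqrt{1+\delta}\,/\,(\sqrt{L(1-2\delta)}-\sqrt{K}\delta)$ reproduces exactly \eqref{eq:noisy-recovery-condition}; positivity of the denominator $\sqrt{L(1-2\delta)}-\sqrt{K}\delta$ is just \eqref{eq:final-condition2}, by the equivalence already used in the noiseless analysis. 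By induction, every iteration captures a new element of $T$, so $T\subseteq T^K$ after $K$ iterations, and the final estimation/selection step then returns the true support.

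The part I expect to be delicate is the noise bookkeeping: the cross terms must be bounded at the correct RIP order and $\norm{\dualproj{T^k}\bvec{e}}$ must not be relaxed prematurely, and, more subtly, the collapse of $u_1>v_L$ to a clean ratio $\norm{\bvec{x}'}/\norm{\bvec{e}}$ works only because the factor $1-\delta$ appears simultaneously in the leading term of the bound for $u_1$ and in the bound for $v_L$ — which in turn hinges on the exact form $\norm{\dualproj{T^k}\phi_i}^2\ge(1-2\delta)/(1-\delta)^2$ coming out of Lemma~\ref{lem:orthogonal-sparse2}. The remaining slack steps that reduce everything to the displayed constant are routine.
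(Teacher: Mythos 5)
Your proposal is correct and follows essentially the same route as the paper: the same decomposition $\bvec{r}^k=\bvec{\Phi}_{T\cup T^k}\bvec{x}'_{T\cup T^k}+\dualproj{T^k}\bvec{e}$, the same $\alpha_N^k/\beta_1^k$ and $u_1/v_L$ comparisons with the noise cross-terms bounded by $\sqrt{1+\delta_{LK+N}}\norm{\bvec{e}}$, the same use of $\norm{\bvec{x}'_{T\cup T^k}}\ge\kappa\norm{\bvec{x}}/\sqrt{K}$ and $\norm{\bvec{e}}\le\sqrt{1+\delta_{LK+N}}\norm{\bvec{x}}/\sqrt{snr}$, and the same observation that $\sqrt{N}(1-\delta)\ge\sqrt{L(1-2\delta)}$ makes the identification threshold dominate, yielding exactly Eq.~\eqref{eq:noisy-recovery-condition}. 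The only differences are cosmetic: you replace $\sqrt{K-c_k}$ by $\sqrt{K}$ from the outset (the paper does this at the end via $c_k\ge 0$) and treat the first iteration as the $k=0$ case of the generic argument instead of the paper's separate $\min\{1,\sqrt{N/K}\}$ analysis, which changes nothing in the conclusion.
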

\begin{proof}
Given in Appendix~\ref{sec:appendix-proof-thms}.
\end{proof}
Note that the m$^2$OLS algorithms reduces to the gOMP algorithm when $N=L$. Theorem~\ref{thm:noiseless-recovery} suggests that for $N=L$, the m$^2$OLS algorithm can recover the true support of any $K$-sparse signal from noiseless measurements within $K$ if the sensing matrix satisfies $\delta_{NK+1}<\frac{1}{\sqrt{K/N+1}+1}$, where $N\ge 1$. Recently Wen~\emph{et al}~\cite{wen2017novel} have established that, with $N\ge 1$, $\delta_{NK+1}<\frac{1}{\sqrt{K/N+1}}$ is a sharp sufficient condition for gOMP to exactly recover $K$-sparse signals from noiseless measurements within $K$ iterations. We see that for large $K/N$ ratio, $\sqrt{K/N+1}+1\approx \sqrt{K/N+1}$, which shows that the bound provided in Theorem~\ref{thm:noiseless-recovery} is nearly sharp when $N=L$. Moreover, in our analysis  When $N>1$, and $L=1$, the bound in Theorem~\ref{thm:noiseless-recovery} reduces to $\delta_{K+N}<\frac{1}{\sqrt{K+1}+1}$, whereas, the recent paper~\cite{wen2017nearly} suggests the sufficient condition $\delta_{K+1}<\frac{1}{\sqrt{K+1}}$ for the OLS to recover perfectly a $K$-sparse signal from noiseless measurements within $K$ iterations. Again, the right hand side of the bound suggested in Theorem~\ref{thm:noiseless-recovery} is very close to the one established in~\cite{wen2017nearly} for large $K$. However, the left hand side of the inequality contains $\delta_{K+N}$ in our case, which can be much larger than $\delta_{K+1}$, and thus makes our condition stricter than that of the one obtained in~\cite{wen2017nearly}. However, since in m$^2$OLS the operations of mOLS are performed on a smaller preselected set of indices to reduce computational cost, intuitively it is expected that the sensing matrix must satisfy some stricter RIP condition in order to yield recovery performance competitive to that of mOLS.

Our proof of the theorems~\ref{thm:noiseless-recovery} and~\ref{thm:noisy-recovery} mainly follows the ideas of the analysis of gOMP~\cite[Theorem 4.2]{wang2012generalized} and mOLS~\cite[Theorem 3]{wang2017recovery} where sufficient conditions for recovery of signal support of a $K$-sparse signal from noisy measurements by running the algorithm no more than $K$ iterations were established. The proof uses mathematical induction, where  we first find out a sufficient condition for success by m$^2$OLS in the first iteration, and then assuming that m$^2$OLS is successful in each of the previous $k(1\le k\le K-1)$ iterations, we find out conditions sufficient for success at the $(k+1)^{\mathrm{th}}$ iteration. However, unlike gOMP or mOLS, finding out the condition for success at any iteration for m$^2$OLS requires ensuring that first at least one true index is captured in the preselection step, which is identical to the gOMP identification step, and then further ensuring that at least one of these captured true indices should be recaptured by the identification step, which is identical to the mOLS identification step. Thus any iteration of m$^2$OLS is successful if the sufficient conditions for both these steps are satisfied. Finally, the sufficient conditions for any general iteration $k(2\le k \le K)$, and the condition for iteration $1$ is combined to obtain the final condition for successful recovery of support within $K$ iterations.

The steps in our proof are partly similar to the steps in the proof of Theorem 3 in Wang \emph{et al}~\cite{wang2017recovery}, and we have frequently used certain steps in the proof of Theorem 1 in the paper of Li \emph{et al}~\cite{li2015sufficient}, specifically \cite[Eq.(25),(26)]{li2015sufficient} and Eq. (9) of Satpathi \emph{et al}~\cite{satpathi2013improving}. However, our analysis have differed from these analysis during the analysis of first iteration of m$^2$OLS, where unlike Wang \emph{et al}~\cite{wang2012generalized}, and Li \emph{et al}~\cite{li2015sufficient} we have given the analysis both for the cases $1\le N\le K$, and $N>K$. Furthermore, in the identification step we have used Lemma~\ref{lem:orthogonal-sparse2} which have produced bound on $\norm{\dualproj{T^k}\bvec{\phi}_i}^2$ tighter than the one that follows from \cite[Eq.(E.7)]{wang2017recovery}, which has been used in the subsequent analysis of m$^2$OLS.  

\section{Comparative analysis of computational complexities of mOLS and m$^2$OLS}
\label{sec:complexity-analysis}
%
By restricting the steps of mOLS to a pre-selected subset of
columns of $\bvec{\Phi}$, the proposed m$^2$OLS algorithm achieves
considerable computational simplicity over mOLS. In this section,
we analyze the computational steps involved in both mOLS and
m$^2$OLS at the $(k+1)^\mathrm{th}$ iteration (i.e., assuming that
$k$ iterations of either algorithm have been completed), and
compare their computational costs in terms of number of floating
point operations (flops) required.
\subsection{Computational steps  of mOLS (in iteration $k+1$)}
\label{sec:computational-cost-mols} \noindent\textbf{Step $1$}
(\textbf{Absolute correlation calculation}) : Here
$\abs{\inprod{\bvec{\phi}_i}{\bvec{r}^k}}$ is calculated $\forall
i\in \support\setminus T^k$, where the vector $\bvec{r}^k$ was
precomputed at the end of the $k^\mathrm{th}$ step. We initialize
$\bvec{r}^0 = \bvec{y}$.\\
\textbf{Step $2$} (\textbf{Identification}) : In this step, mOLS
first calculates the ratios
$\frac{\abs{\inprod{\bvec{\phi}_i}{\bvec{r}^k}}}
{\norm{\dualproj{T^k}\bvec{\phi}_i}},\ \forall i\in
\support\setminus T^k$. Since $\forall i\in \support \setminus
T^k$, the numerator was calculated in Step 1, only the denominator
needs to be calculated. However, as will be discussed later, at
the end of each $k^\mathrm{th}$ step, the norms
$\norm{\dualproj{T^{k}}\bvec{\phi}_i},\ i\in \support\setminus
T^{k}$ are calculated and stored, which provides the denominators
in the above ratios. This means, the above computation requires
simply a division operation per ratio and a total of $(n-Lk)$
divisions. This step is followed by finding the $L$ largest of the
above ratios, and appending the corresponding columns to the
previously estimated subset of columns, $\bvec{\Phi}_{T^k}$,
thereby generating $\bvec{\Phi}_{T^{k+1}}$ (for $k=0$,
$T^k=\emptyset$ and thus, $\bvec{\Phi}_{T^k}=\emptyset$).
\\
\textbf{Step $3$} (\textbf{Modified Gram Schmidt}) :  This step
finds an orthonormal basis for $\spn{\bvec{\Phi}_{T^{k+1}}}$.
Assuming that an orthonormal basis $\{\bvec{u}_1,\ \cdots,\
\bvec{u}_{\abs{T^k}}\}$ for $\spn{\bvec{\Phi}_{T^{k}}}$ has
already been computed at the $k^\mathrm{th}$ step, an efficient
way to realize this will be to employ the well known Modified Gram
Schmidt (MGS) procedure \cite{golub2012matrix}, which first
computes $\dualproj{T^k}\bvec{\phi}_i,\ i\in h^{k+1}$ using the
above precomputed orthonormal basis and then, orthonormalizes them
recursively, generating the orthonormal set $\{\bvec{u}_{\abs{T}^k
+ 1},\cdots,\ \bvec{u}_{\abs{T^{k+1}}}\}$.
\\
\textbf{Step $4$} (\textbf{Precomputation of orthogonal projection
error norm}) : At the $(k+1)^\mathrm{th}$ step, after MGS is used
to construct an orthonormal basis for
$\spn{\bvec{\Phi}_{T^{k+1}}}$, the norms
$\norm{\dualproj{T^{k+1}}\bvec{\phi}_i},\ i\in \support \setminus
T^{k+1}$, are computed using the following recursive relation, for
use in the identification step of $(k+2)^\mathrm{th}$
step:\begin{align} \label{eq:phi-proj-error-calculation}
\norm{\dualproj{T^{k+1}}\bvec{\phi}_i}^2 & =
\norm{\dualproj{T^{k}}\bvec{\phi}_i}^2 - \sum_{j=\abs{T^{k}} +
1}^{\abs{T^{k+1}}}\abs{\inprod{\bvec{\phi}_i}{\bvec{u_j}}}^2.
\end{align}
\\
\textbf{Step $4$} (\textbf{Calculation of $\bvec{r}^{k+1}$}) :
Finally mOLS calculates the residual vector $\bvec{r}^{k+1}$ as
follows:
\begin{align} \label{eq:calculation-of-residual} \bvec{r}^{k+1} =
\bvec{r}^k - \sum_{j=\abs{T^k} +
1}^{\abs{T^{k+1}}}\inprod{\bvec{y}}{\bvec{u}_j}\bvec{u}_j.
\end{align}
%
\subsection{Computational steps of m$^2$OLS (in iteration $k+1$)}
\label{sec:computational-cost-m2ols} \noindent\textbf{Step $1$}
(\textbf{Preselection}): In this step, similar to mOLS, the
absolute correlations $\abs{\inprod{\bvec{\phi}_i}{\bvec{r}^k}}$
are calculated using the vector $\bvec{r}^k$ which is precomputed
at the end of the $k^\mathrm{th}$ step.
Then the indices corresponding to the $N$
largest absolute correlations are selected to form the set
$S^{k+1}$.
\\
\textbf{Step $2$} (\textbf{Identification}): The identification
step calculates the ratios
$\frac{\abs{\inprod{\bvec{\phi}_i}{\bvec{r}^k}}}{\norm{\dualproj{T^k}\bvec{\phi}_i}},\
\forall i\in S^{k+1}$, for which the numerators are already known
from Step $1$ and the denominators are calculated as per the
following:
\begin{equation}
\norm{\dualproj{T^k}\bvec{\phi}_i}^2~=~ \norm{\bvec{\phi}_i}^2 -
\sum_{j=1}^{\abs{T^k}}\abs{\inprod{\bvec{\phi}_i}{\bvec{u_j}}}^2,\label{eq:m2ols-denom-norm-computation}
\end{equation}
where, as in mOLS, $\{\bvec{u}_{1},\cdots,\
\bvec{u}_{\abs{T}^k}\}$ is the orthonormal basis formed for
$\spn{\bvec{\Phi}_{T^k}}$ using MGS at step $k$. For the first
step $k=0$, $T^k=\emptyset$, and thus
$\norm{\dualproj{T^0}\bvec{\phi}_i} = \norm{\bvec{\phi}_i},\ i\in
\support.$ It is assumed that the norms $\norm{\bvec{\phi}_i}$ are
all precomputed (taken to be unity in this paper).
This computation is followed by $N$ divisions as required to form
the above ratios. Following this, the indices corresponding to the
largest $L$ of the $N$ ratios are determined and the corresponding
columns are appended to the previously estimated set of columns
$\bvec{\Phi}_{T^k}$ to obtain $\bvec{\Phi}_{T^{k+1}}.$
\\
\textbf{Step $3$} (\textbf{Modified Gram Schmidt}): This step is identical to the MGS step in mOLS, which
generates an orthonormal basis for $\spn{\bvec{\Phi}_{T^{k+1}}}$.
\\
\textbf{Step $4$} (\textbf{Computation of $\bvec{r}^{k+1}$}): As
in mOLS, the residual $\bvec{r}^{k+1}$ is updated using
Eq.~\eqref{eq:calculation-of-residual}.
%
\\\\
\textbf{Comparison between computational complexities of mOLS and
m$^2$OLS: } While certain operations like MGS, computation of
absolute correlation and the residual $\bvec{r}^{k}$ are same in
both mOLS and m$^2$OLS, the major computational difference between
them lies in the following : at the end of every
$(k+1)^\mathrm{th}$ step, the mOLS computes
$\norm{\dualproj{T^{k+1}}\bvec{\phi}_i}^2\;\forall i\in \support
\setminus T^{k+1}$ using recursion of the form
\eqref{eq:phi-proj-error-calculation}. If computation of
$\abs{\inprod{\bvec{\phi}_i}{\bvec{u_j}}}^2$ has a complexity of
$r$ flops, this requires a total of $(n-L(k+1)
 )(Lr+1)$ flops.
Additionally, mOLS requires $(n-Lk)$ divisions to compute the
ratios $\frac{\abs{\inprod{\bvec{\phi}_i}{\bvec{r}^k}}}
{\norm{\dualproj{T^k}\bvec{\phi}_i}},\ \forall i\in
\support\setminus T^k$. The m$^2$OLS, on the other hand,
calculates $\abs{\inprod{\bvec{\phi}_i}{\bvec{u_j}}}^2$ only for
$i\in S^{k+1}$, following \eqref{eq:m2ols-denom-norm-computation},
involving at the most just $N$ and not $(n-Lk)$ columns. The
summation on the RHS of \eqref{eq:m2ols-denom-norm-computation},
however, has $Lk$ terms, meaning this step requires a total of
$N(Lkr+2)$ flops (including the $N$ divisions to compute
$\frac{\abs{\inprod{\bvec{\phi}_i}{\bvec{r}^k}}}
{\norm{\dualproj{T^k}\bvec{\phi}_i}},\ \forall i\in S^{k+1}$).
Clearly, mOLS will require more computations than m$^2$OLS as long
as $1<\frac{(n-Lk)(Lr+2)-L(Lr+1)}{N(Lkr+2)}\approx
\frac{n-Lk-L}{Nk}$, or, equivalently, for $k<\frac{n-L}{N+L}$.
Thus, for large $n$ and / or small $K$, as $k\le K$, the mOLS will
have significantly higher computational overhead as compared to
m$^2$OLS at each iteration $k$ and the difference in cumulative
computational cost over all iterations put together will be huge.
Even when the sparsity $K$ is larger ($2K<m$), the actual number
of iterations, say, $J$ required for convergence by both mOLS and
m$^2$OLS is usually much less than $K$ and the above difference
continues. In case of large $K$ and $J$ close to $K$, the mOLS
will require more computations than m$^2$OLS for $k$ upto certain
value, beyond which m$^2$OLS will start having more computations
and thus, the difference in cumulative computational cost between
mOLS and m$^2$OLS will start reducing with $k$. This means, for
large $K$, we have a reasonably large range of $J$ for which the
overall computational cost of mOLS remains substantially higher
than that of m$^2$OLS. The above comparative assessment of mOLS
and m$^2$OLS in terms of computations required is also validated
by simulation studies as presented in the next section.
\section{Simulation results}
\label{sec:simulation-results}
%
%
\begin{figure}[t!]
\begin{subfigure}{.5\textwidth}
\centering
\includegraphics[height=1.5in,width=2.5in]{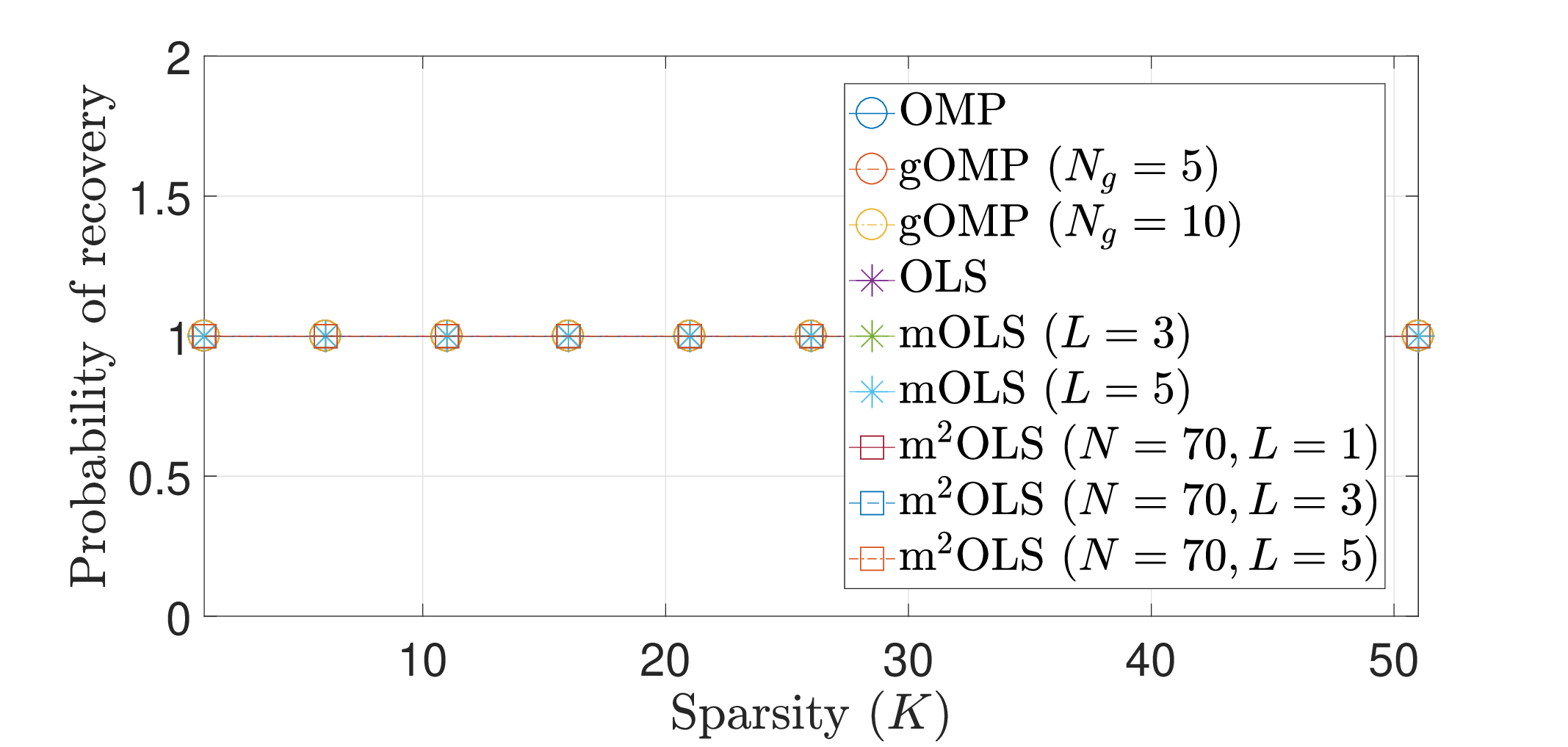}
\caption{$\tau=0$}
\label{fig:prob_K_T=0}
\end{subfigure}
\begin{subfigure}{.5\textwidth}
\centering
\includegraphics[height=1.5in,width=2.5in]{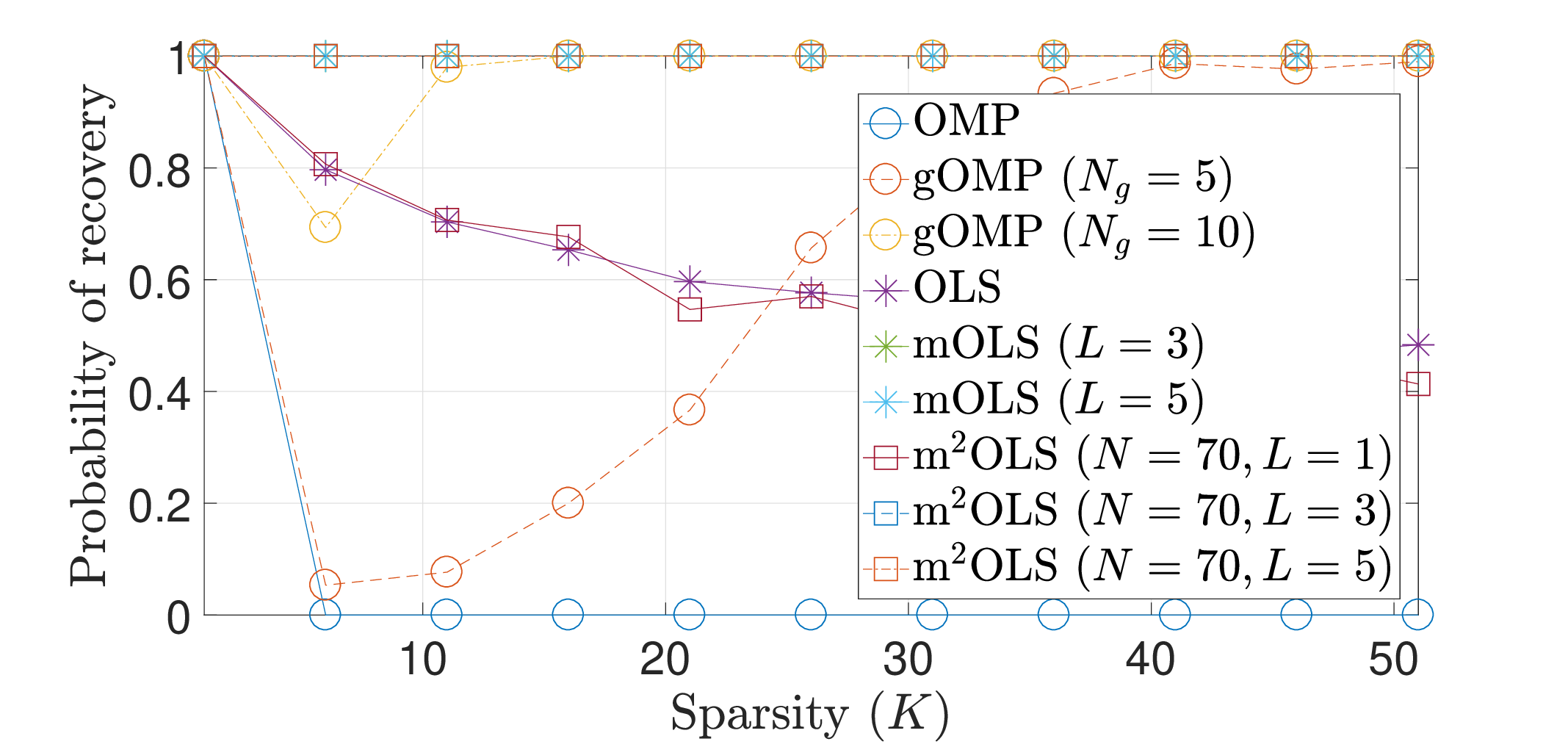}
\caption{$\tau=8$}
\label{fig:prob_K_T=8}
\end{subfigure}
\caption{Recovery probability vs sparsity.}
\label{fig:recovery-prob_vs_K}
\end{figure}
\begin{figure}[t!]
\begin{subfigure}{.5\textwidth}
\centering
\includegraphics[height=1.5in,width=2.5in]{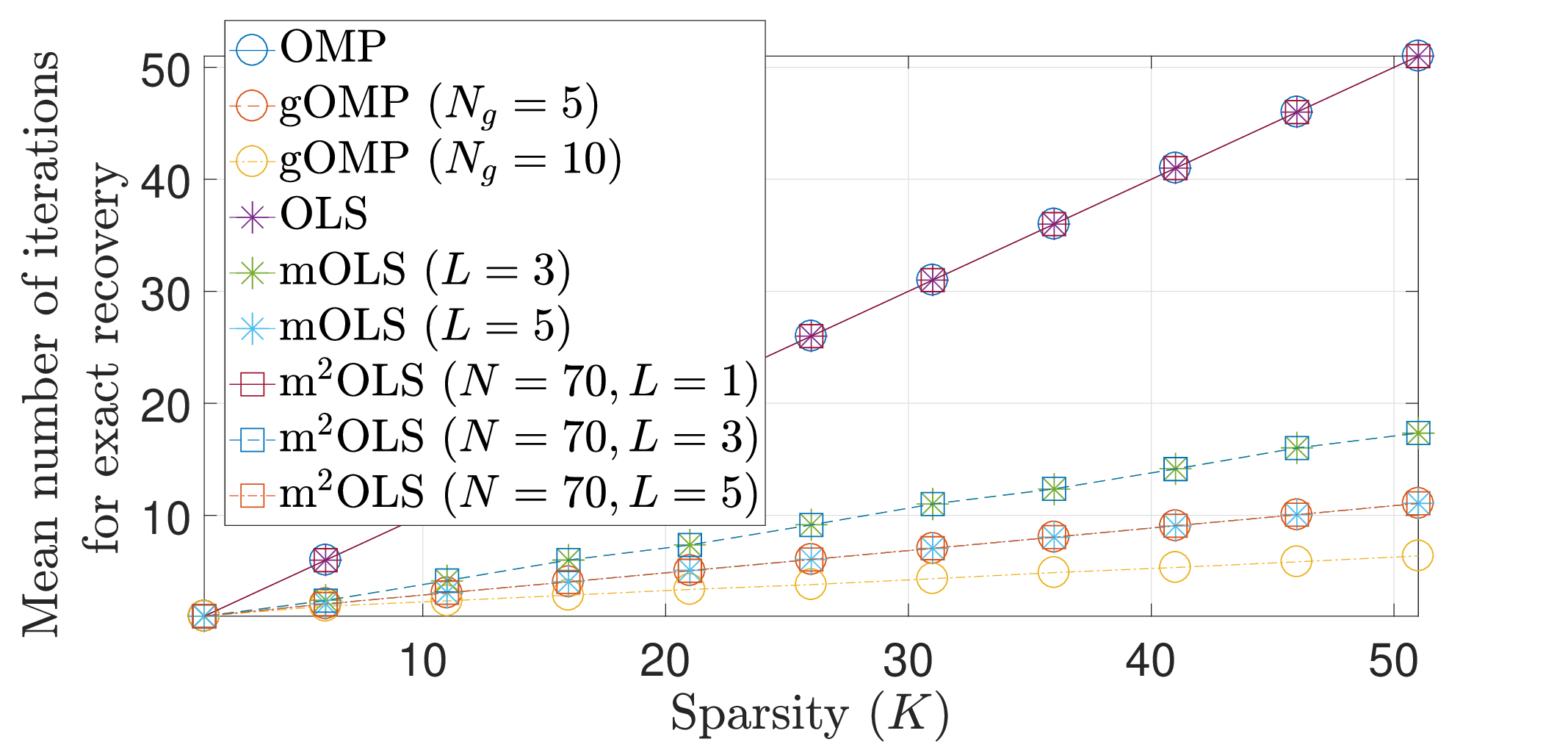}
\caption{$\tau=0$}
\label{fig:no_iteration_T=0}
\end{subfigure}
\begin{subfigure}{.5\textwidth}
\centering
\includegraphics[height=1.5in,width=2.5in]{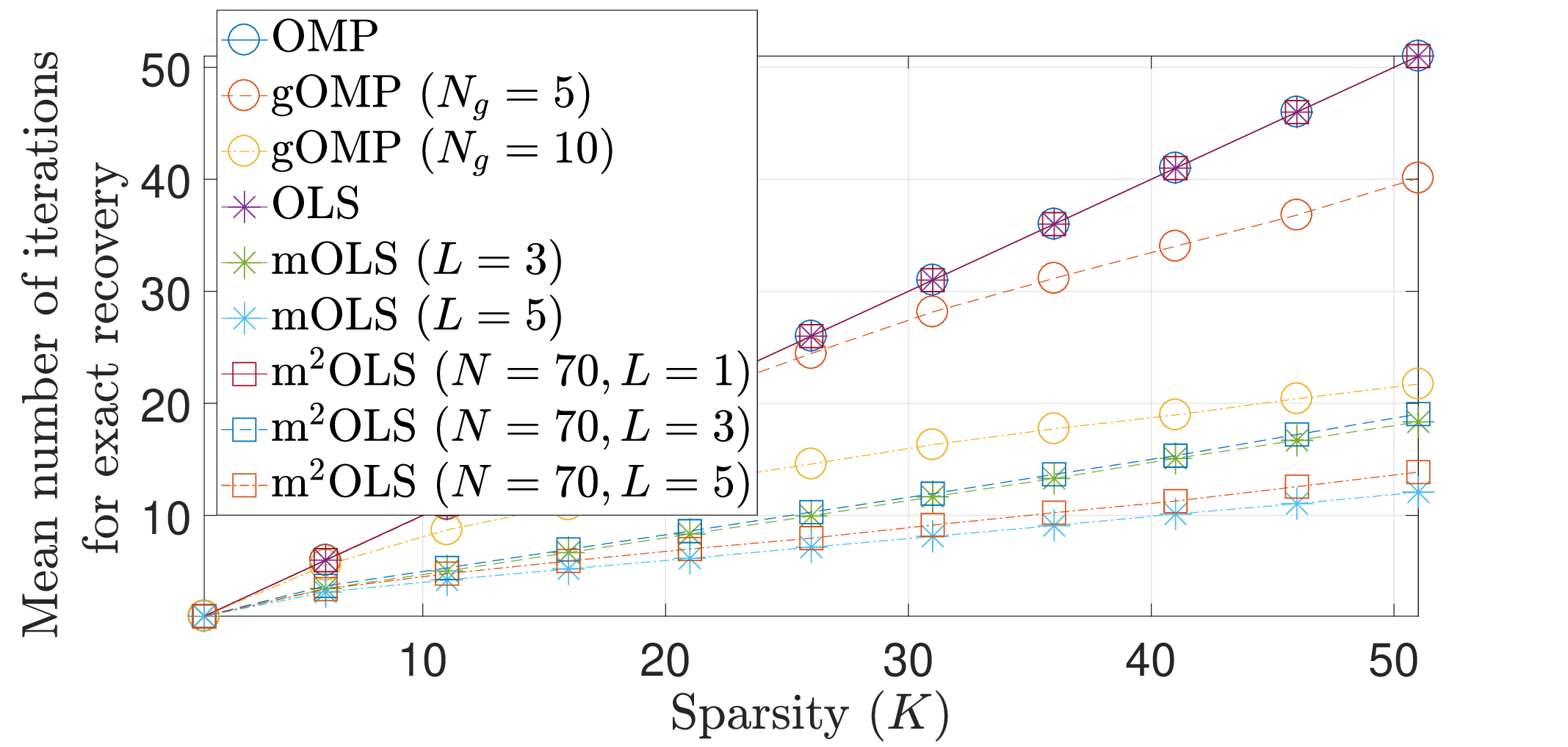}
\caption{$\tau=8$}
\label{fig:no_iteration_T=8}
\end{subfigure}
\caption{No. of iterations for exact recovery vs sparsity.}
\label{fig:no_iteration_sparsity}
\end{figure}
\begin{figure}[t!]
\begin{subfigure}{.5\textwidth}
\centering
\includegraphics[height=1.5in,width=2.5in]{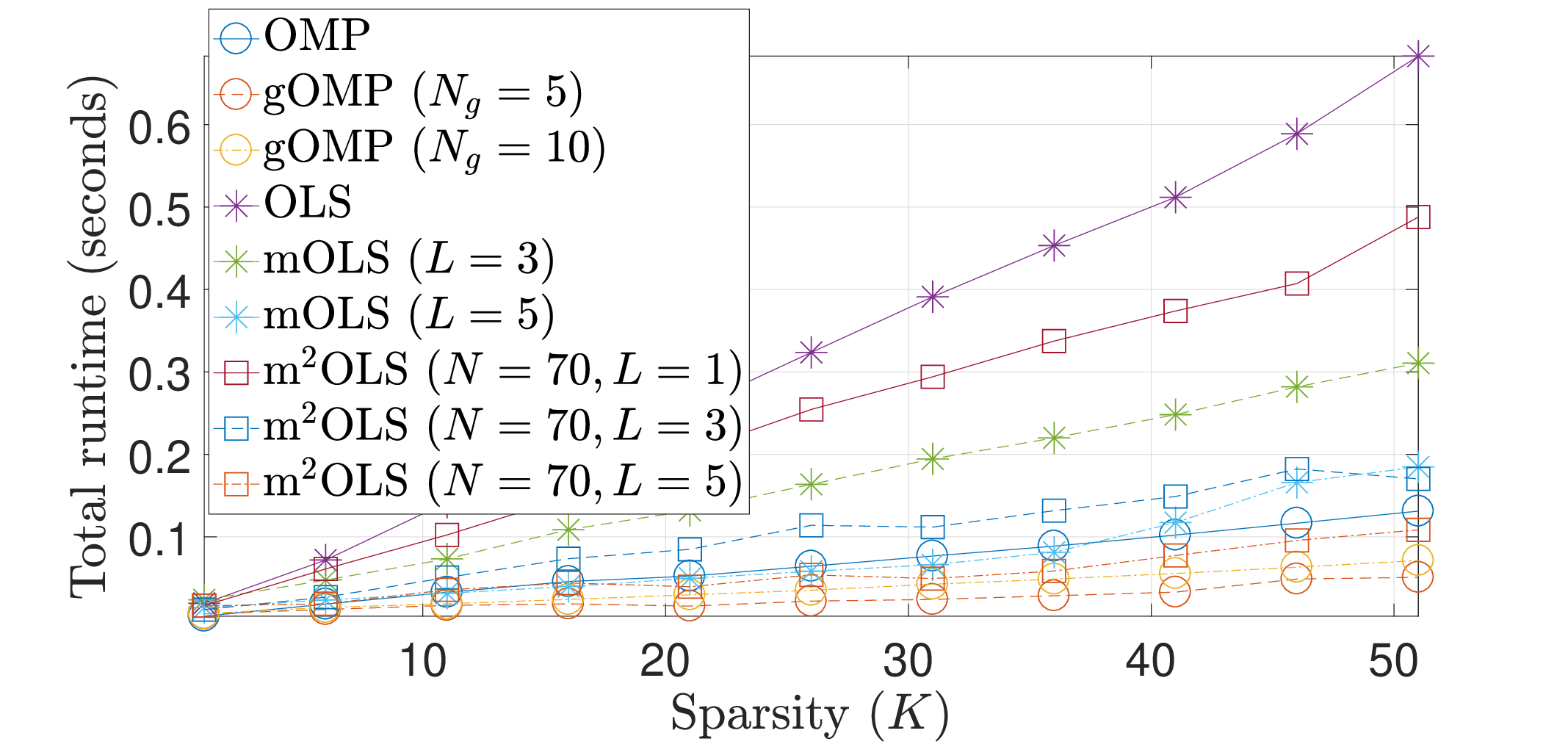}
\caption{$\tau=0$}
\label{fig:runtime_T=0}
\end{subfigure}
\begin{subfigure}{.5\textwidth}
\centering
\includegraphics[height=1.5in,width=2.5in]{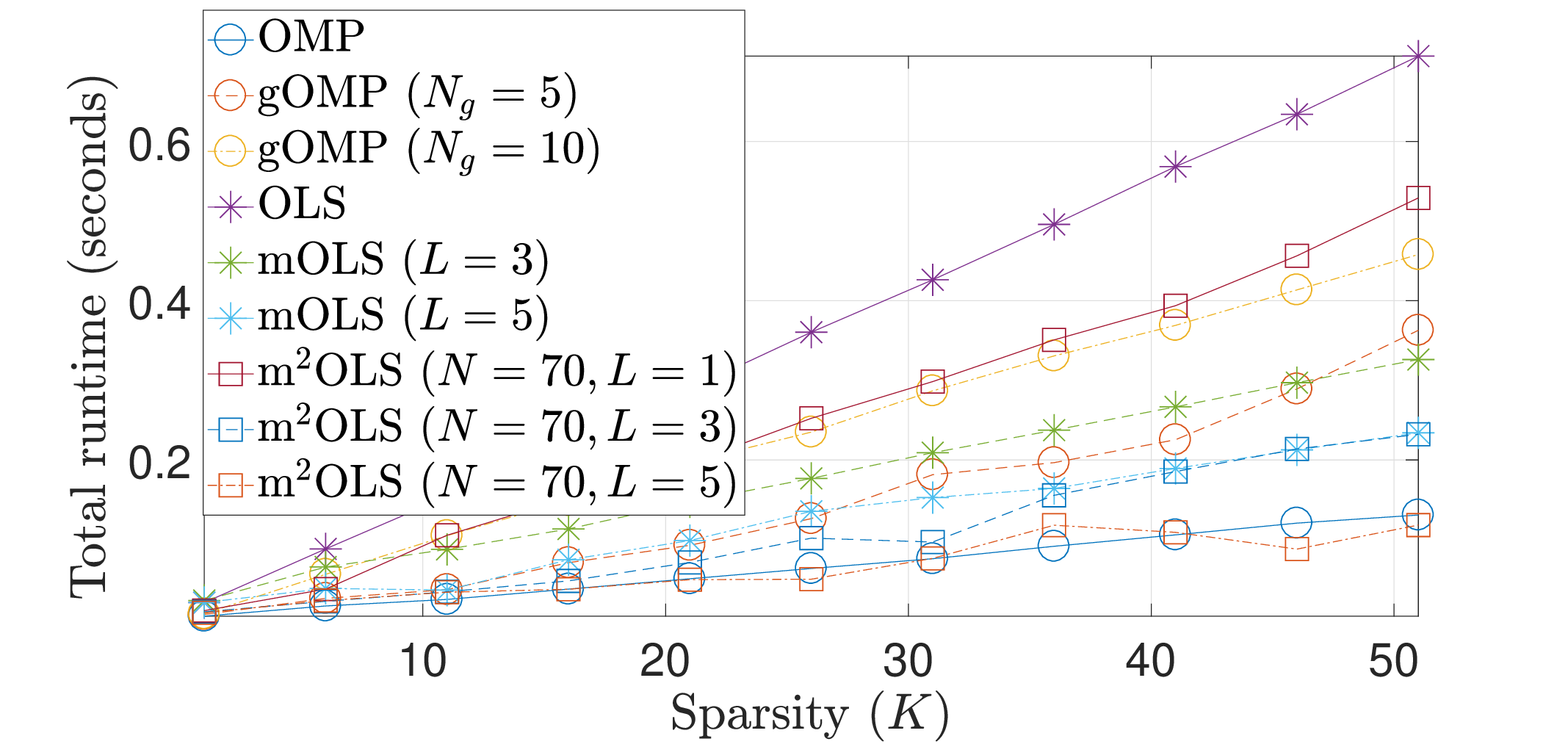}
\caption{$\tau=8$}
\label{fig:runtime_T=8}
\end{subfigure}
\caption{Mean runtime vs sparsity.} \label{fig:runtime_sparsity}
\end{figure}
\begin{figure}[t!]
\begin{subfigure}{.5\textwidth}
\centering
\includegraphics[height=1.5in,width=2.5in]{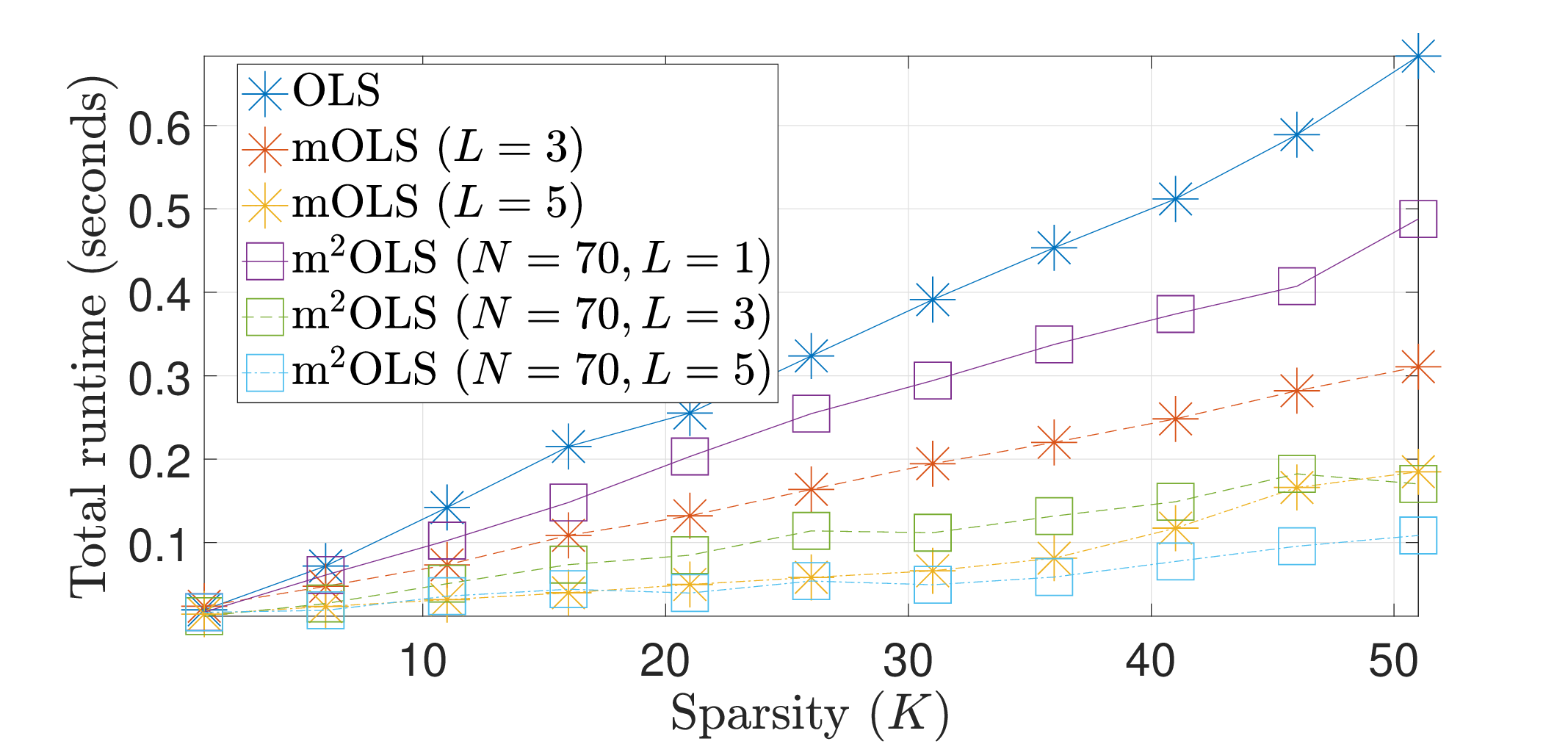}
\caption{$\tau=0$}
\label{fig:runtime_mols-m2ols_diff_N_T=0}
\end{subfigure}
\begin{subfigure}{.5\textwidth}
\centering
\includegraphics[height=1.5in,width=2.5in]{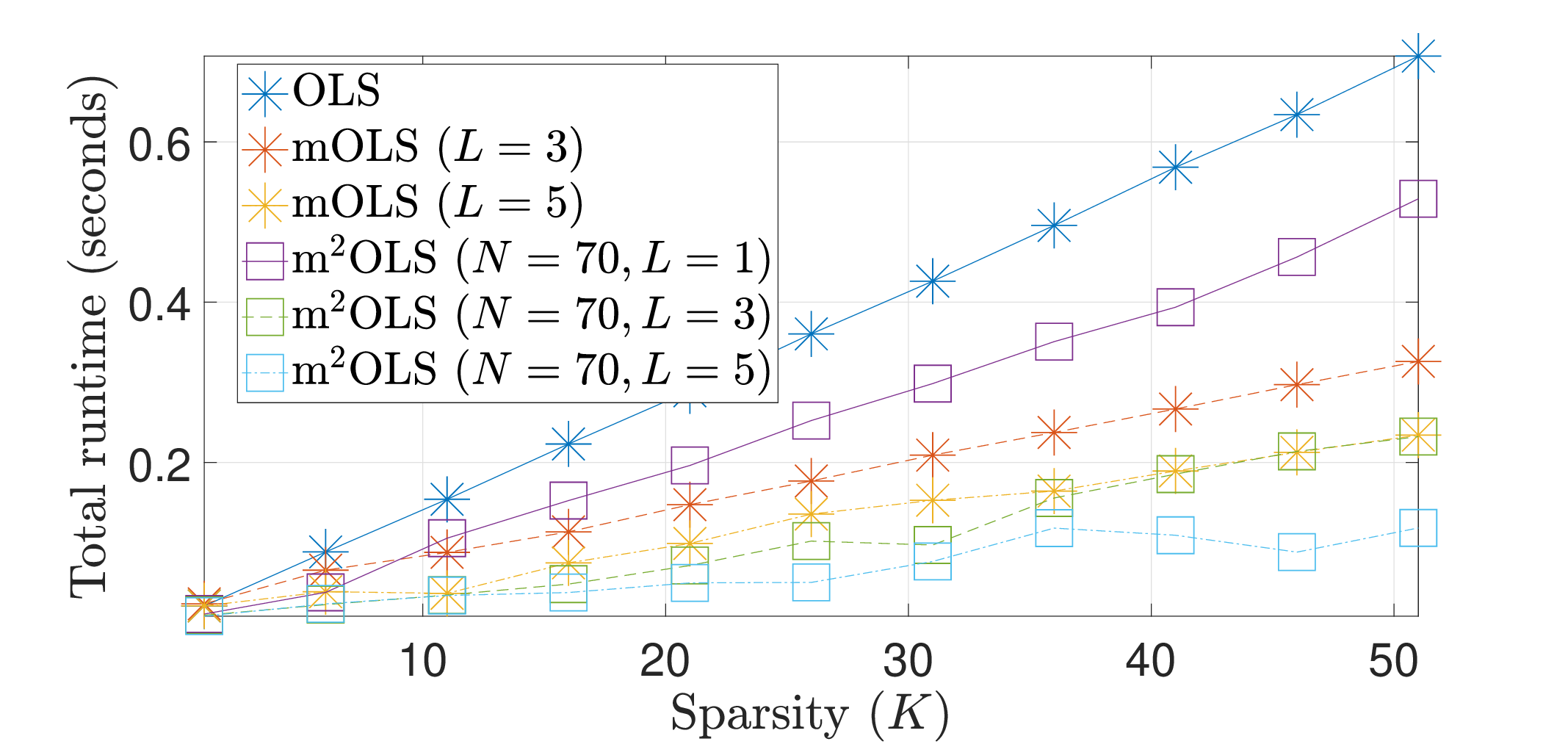}
\caption{$\tau=8$}
\label{fig:runtime_mols_m2ols_diff_N_T=8}
\end{subfigure}
\caption{Mean runtime vs sparsity (mOLS and m$^2$OLS) for different $L$.}
\label{fig:runtime_mols_m2ols_different_L}
\end{figure}
\begin{figure}[t!]
\begin{subfigure}{.5\textwidth}
\centering
\includegraphics[height=1.5in,width=2.5in]{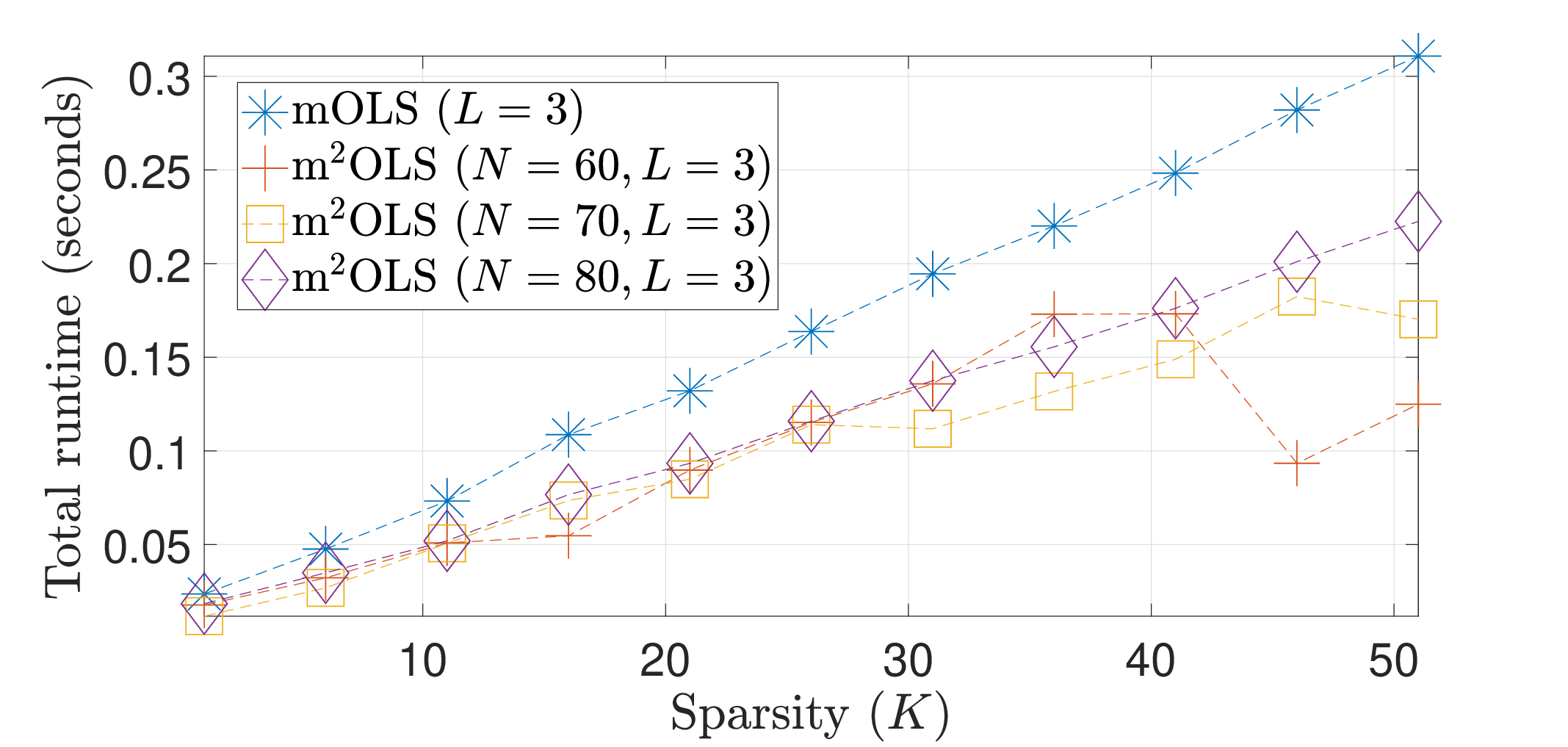}
\caption{$\tau=0$}
\label{fig:runtime_mols-m2ols_diff_L_T=0}
\end{subfigure}
\begin{subfigure}{.5\textwidth}
\centering
\includegraphics[height=1.5in,width=2.5in]{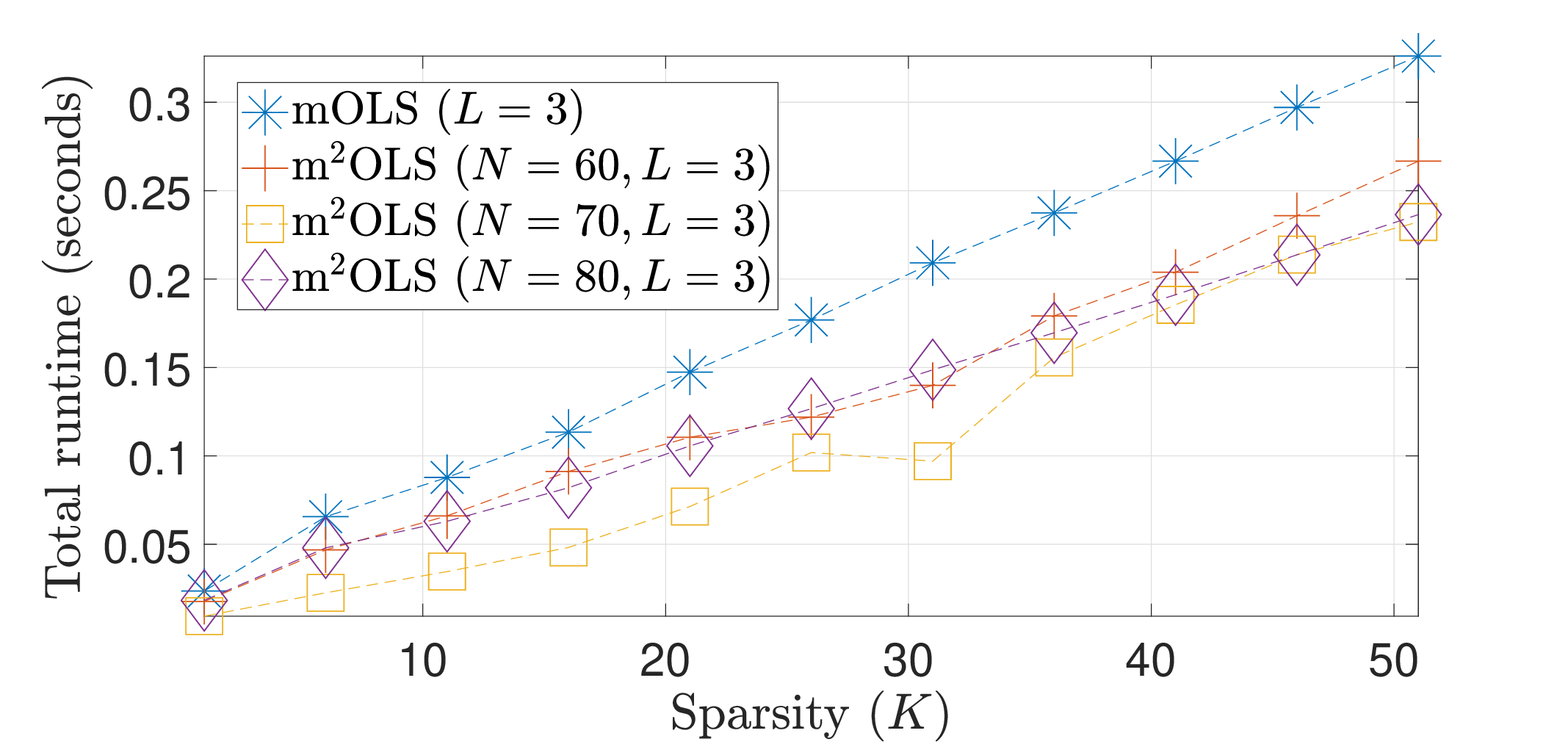}
\caption{$\tau=8$}
\label{fig:runtime_mols_m2ols_diff_L_T=8}
\end{subfigure}
\caption{Mean runtime vs sparsity (mOLS and m$^2$OLS) for different $N$.}
\label{fig:runtime_mols_m2ols_different_N}
\end{figure}
\begin{figure}[t!]
\begin{subfigure}{.5\textwidth}
\centering
\includegraphics[height=1.5in,width=2.5in]{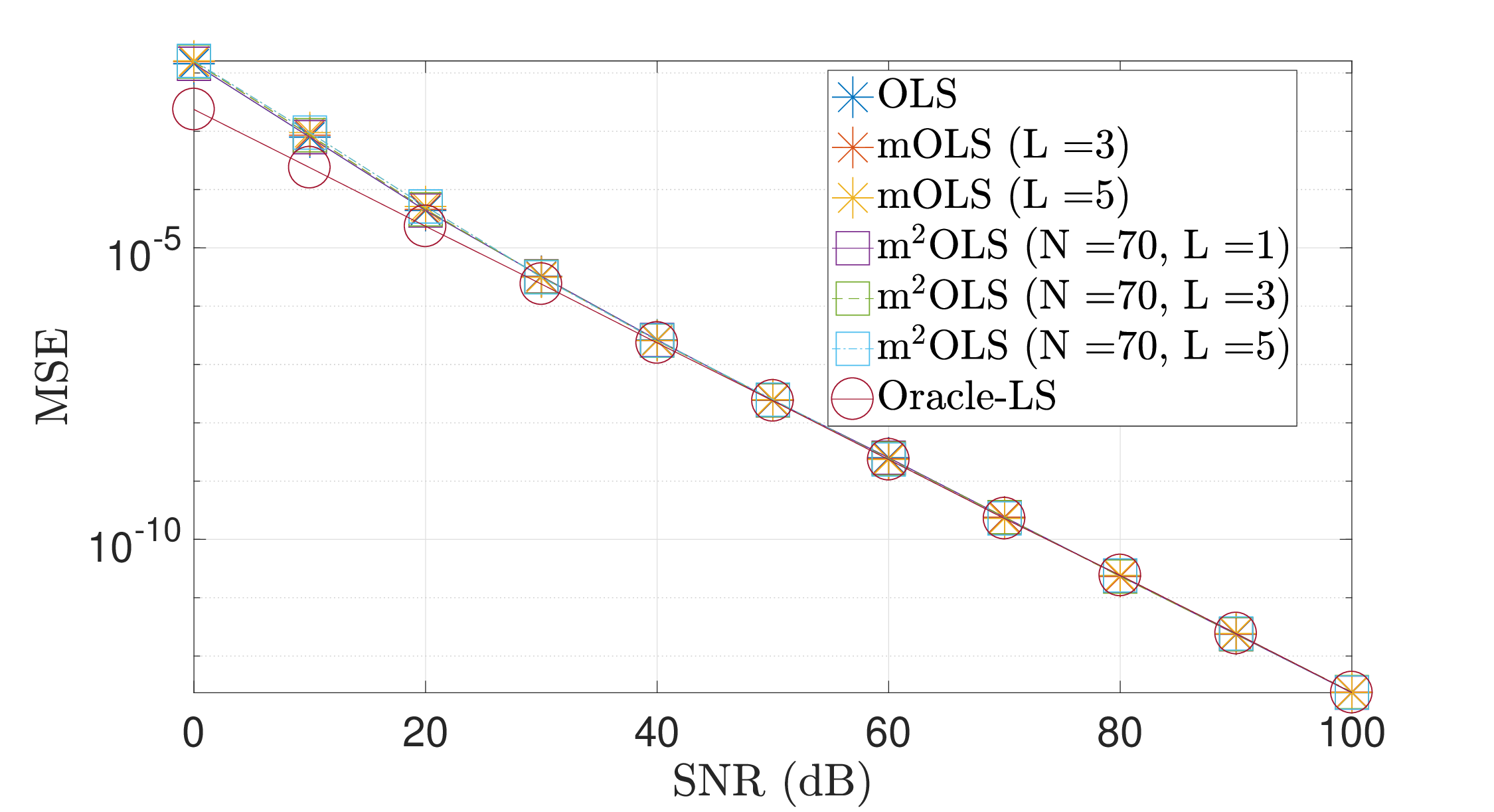}
\caption{$\tau=0$}
\label{fig:msd_vs_snr_T=0}
\end{subfigure}
\begin{subfigure}{.5\textwidth}
\centering
\includegraphics[height=1.5in,width=2.5in]{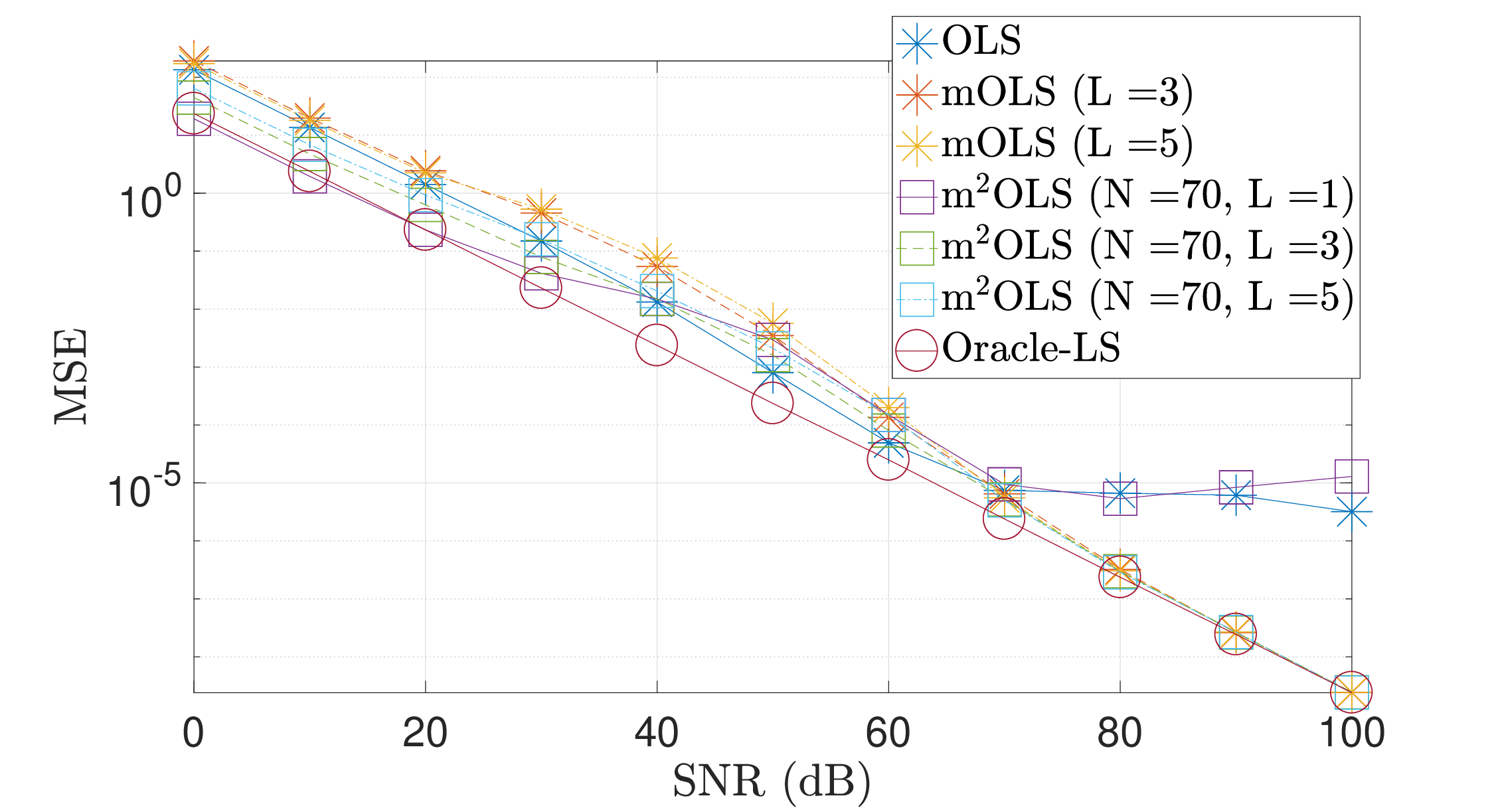}
\caption{$\tau=8$}
\label{fig:msd_vs_snr_T=8}
\end{subfigure}
\caption{Mean Square Error (MSE) vs SNR ($K=30$)}
\label{fig:msd_vs_snr}
\end{figure}
\begin{figure}[t!]
\begin{subfigure}{.5\textwidth}
\centering
\includegraphics[height=2in,width=3.5in]{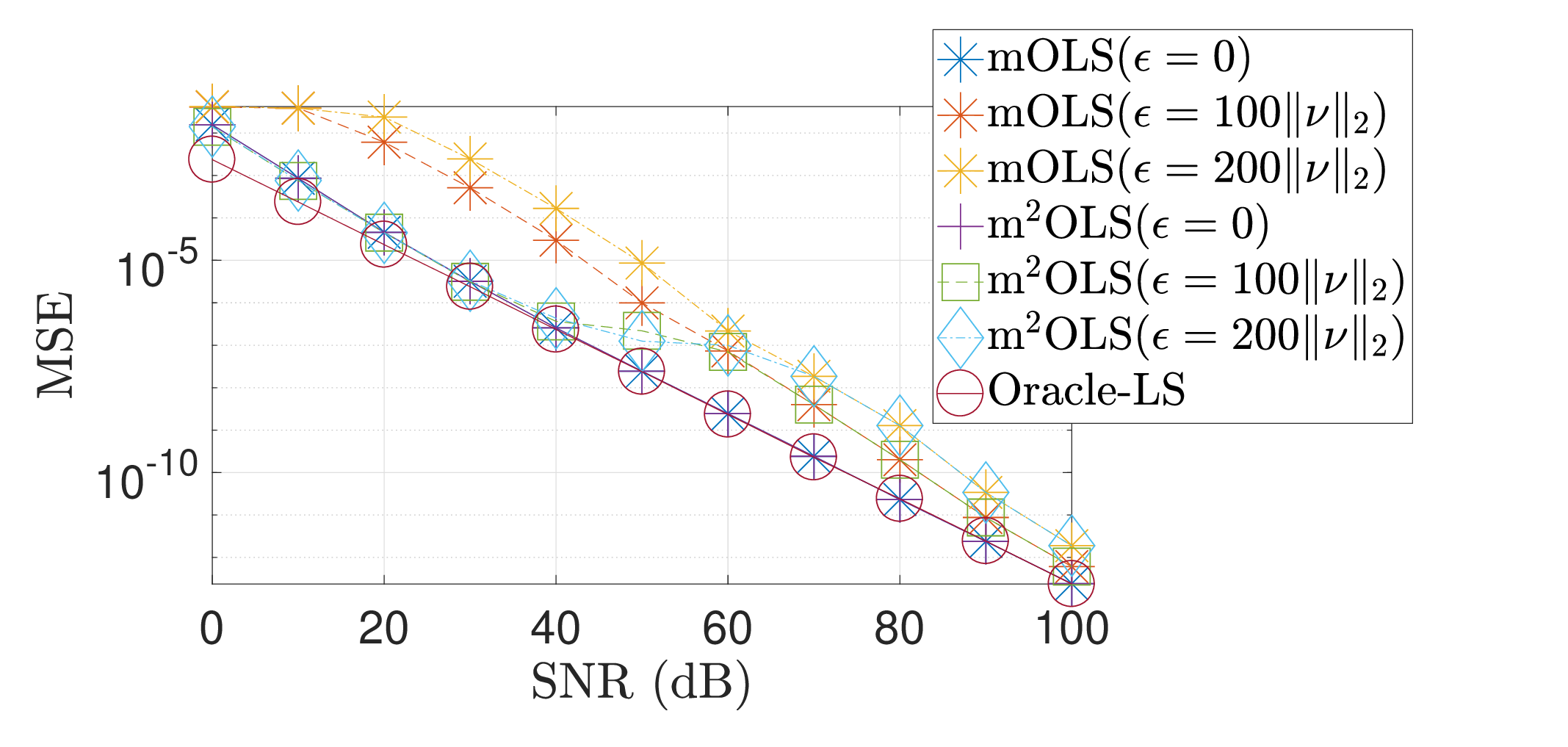}
\caption{$\tau=0$}
\label{fig:msd_vs_snr_eps_T=0}
\end{subfigure}
\begin{subfigure}{.5\textwidth}
\centering
\includegraphics[height=2in,width=3.5in]{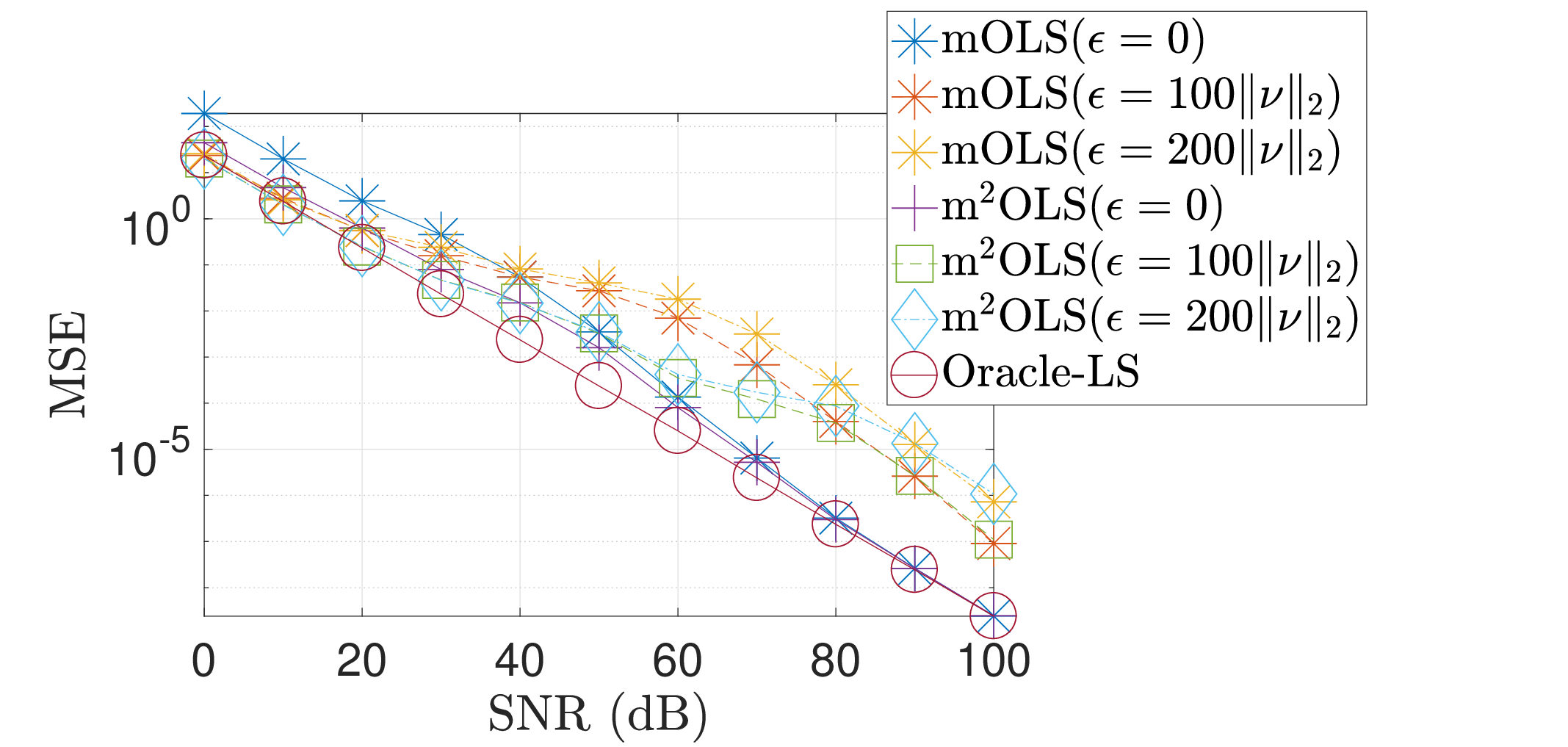}
\caption{$\tau=8$}
\label{fig:msd_vs_snr_eps_T=8}
\end{subfigure}
\caption{Mean Square Error (MSE) vs SNR ($K=30,\ N=70,\ L=3$) for different values of $\epsilon$}
\label{fig:msd_vs_snr}
\end{figure}

For simulation, we constructed measurement matrices with
correlated entries, as used by Soussen et
al~\cite{soussen2013joint}. For this, first a matrix $\bvec{A}$ is
formed such that $a_{ij}=[\bvec{A}]_{ij}$ is given by
$a_{ij}=n_{ij}+t_j$ where $n_{ij}\sim \mathcal{N}(0,1/m)$ i.i.d.
$\forall i,j$, $\ t_j\sim \mathcal{U}[0,\tau]\forall j$, and
$\{n_{ij}\}$ is statistically independent of $\{t_k\}$, $\forall
i,j,k$. The measurement matrix $\bvec{\Phi}$ is then constructed
from  $\bvec{A}$ as $\phi_{ij}=a_{ij}/\norm{\bvec{a}_j}$, where
$\phi_{ij}=[\bvec{\Phi}]_{ij}$ and $\bvec{a}_i$ denotes the $i$-th
column of $\bvec{A}$. Note that in the construction process for
$\bvec{\Phi}$, the random variables $n_{ij}$ play the role of
additive i.i.d. noise process, added to the elements of a rank $1$
matrix, with columns $\{t_i\bvec{1}\}_{i=1}^n$, where $\bvec{1}$
denotes a $m\times 1$ vector with all entries equal to one. If the
value of $\tau$ becomes large as compared to the variance $1/m$ of
$n_{ij}$, then the matrix $\bvec{\Phi}$ resembles a rank $1$
matrix with normalized columns. For all the simulations, the
values of $m,\ n$ were fixed at $500, 800$ respectively while the
sparsity $K$ was varied. The nonzero elements of $\bvec{x}$ were
drawn randomly from i.i.d Gaussian distribution and $\tau$ was
chosen to have values either $0$ or $8$. Note that higher the
value of $\tau$, more will be the correlation (taken as the
absolute value of the inner product, which is a measure of
\emph{coherence}) between the columns of $\bvec{\Phi}$. Thus,
$\tau = 0$ produces a matrix with uncorrelated columns while $\tau
= 8$ produces a matrix with reasonably correlated columns. Furthermore, different values for the window sizes for the preselection stage ($N$) of m$^2$OLS and the identification stages ($L$) of both mOLS and m$^2$OLS were used for the simulation. Particularly, the values $\{60,\ 70,\ 80\}$ were used for $N$, and $\{1,\ 3,\ 5\}$ were used for $L$. Moreover, we used different values for $N_g$, the number of indices identified at the identification stage of gOMP, from the set $\{1,\ 5, 10\}$.
%

For each value of $K$, the gOMP, mOLS and m$^2$OLS were run till they
converge or upto the $K$-th step of iteration, whichever is
earlier, and the experiment was conducted $500$ times. To evaluate
the performance of the algorithms, three performance metrics were
considered, namely, recovery probability, mean number of
iterations ($\le K$) for convergence and mean runtime. Of these
the recovery probability was obtained by counting the number of
times out of the $500$ trials each algorithm converges, while for the other two, averaging was
done over the $500$ trials~\cite{tropp2007signal}. 
%
%
 In the \textbf{first} simulation exercise, the recovery probabilities are plotted
against $K$. The plots,  shown in
Fig.~\ref{fig:recovery-prob_vs_K}(a) and (b) for $\tau=0$ and
$\tau=8$ respectively, suggest that even for highly correlated
dictionaries ($\tau=8$), the probability of recovery exhibited by
m$^2$OLS is identical to mOLS over the entire sparsity range
considered, and both of them outmatch the recovery probability performance of gOMP. However, it is observed from Fig.~\ref{fig:prob_K_T=8} that for the correlated dictionary the recovery probability of the gOMP does increase with the increase of $N_g$. The \textbf{second} simulation exercise evaluates the
average no. of iterations required by the two algorithms for exact
recovery for each value of $K$. The corresponding results, shown
in Fig.~\ref{fig:no_iteration_sparsity}(a) and (b) for $\tau=0$
and $\tau=8$ respectively, reveal that for the uncorrelated case
($\tau=0$), both mOLS and m$^2$OLS algorithms require the same average number of
iterations for successful recovery, and it is only under $\tau=8$
that as $K$ increases beyond a point, there is a marginal increase
in the average number of iterations in m$^2$OLS over mOLS. On the other hand gOMP requires relatively smaller number of iterations for $\tau =0$ and for $\tau = 8$ the required number of iterations increase with increase in $N_g$. In our \textbf{third} exercise, we evaluated the average of total runtime
for all three algorithms, against $K$. The corresponding results are 
shown in Fig.~\ref{fig:runtime_sparsity}(a) and (b) and the results for mOLS and m$^2$OLS are shown in Fig.~\ref{fig:runtime_mols_m2ols_different_L}, for $\tau=0$
and $\tau=8$ respectively. The figures demonstrate the superiority of the
proposed m$^2$OLS algorithm over mOLS as well as gOMP, as the former is seen to
require much less running time than mOLS for both values of
$\tau$, and has runtime closer to gOMP for $\tau =0$, and less than gOMP for $\tau = 8$. The results also illustrate that the runtime of m$^2$OLS decreases with increasing $L$ while maintaining lesser runtime than mOLS. This validates the conjectures made at the end of
Sec~\ref{sec:complexity-analysis} regarding the reduced
computational overhead of m$^2$OLS over a large range of sparsity
values. The plots also suggest that for $\tau = 8$, while gOMP with large $N_g$ can exhibit probability of recovery performance almost as good as mOLS (as demonstrated by Fig.~\ref{fig:prob_K_T=8} ), the runtime is significantly higher compared to m$^2$OLS. This is because as the plots in Fig.~\ref{fig:no_iteration_T=8} suggest, for larger $N_g$ gOMP requires much larger number of iterations for convergence than mOLS and m$^2$OLS even when $N_g$ is high, making its runtime higher. We also plot in Fig.~\ref{fig:runtime_mols_m2ols_different_N} runtimes for mOLS and m$^2$OLS for different $N$ with $L=3$ fixed. The results here demonstrate that the runtime of m$^2$OLS can be controlled by changing the preselection step size $N$, which is intuitively expected. In our \textbf{fourth} exercise, we ran the mOLS and
m$^2$OLS algorithms with measurements corrupted by additive
Gaussian noise with varying SNR (as defined in
Sec~\ref{sec:signal-recovery-theoretical-conditions}). The mean
square error (MSE) is computed as
$\norm{\hat{\bvec{x}}-\bvec{x}}^2$ where $\bvec{x}$ is the
original vector and $\hat{\bvec{x}}$ is its estimate as produced
by the algorithm. As a benchmark, the MSE of the Oracle estimator
is plotted, where the Oracle estimator computes the least squares
estimate of the optimal vector in presence of noise. For this experiment we consider $N=70,\ L=1,3,5$. The plots in Fig.~\ref{fig:msd_vs_snr}(a) and (b) demonstrate that for
uncorrelated dictionaries ($\tau =0$) the two algorithms exhibit
almost the same performance, while for correlated dictionaries ($\tau =8$),
m$^2$OLS has actually a slightly better MSE performance than mOLS
for different values of $L$. Also, the performance of mOLS as well as m$^2$OLS is for $L=1$ is the best in the low SNR region and the worst in the high SNR region. This is because when $L=1$, mOLS and m$^2$OLS incurs smaller error than for $L=3,5$, where larger error occurs because of selection of larger number of incorrect indices. We also experiment with the sensitivity of the MSE performances of the mOLS and m$^2$OLS algorithms with respect to the parameter $\epsilon$ on which the termination of the mOLS and m$^2$OLS algorithms depend on. We fix $N=70,\ L=3$, and take the values of $\epsilon$ as $\epsilon=\eta\norm{\bvec{\nu}}$, where $\bvec{\nu}$ is the measurement noise vector, and $\eta\in\{0,100,200\}$. We observe from the Fig.~\ref{fig:msd_vs_snr_eps_T=0} and \ref{fig:msd_vs_snr_eps_T=8} that for both the uncorrelated and correlated dictionaries, the MSE of m$^2$OLS is smaller than that of mOLS. Furthermore, early termination (high $\epsilon$) increases the MSE of m$^2$OLS in the high SNR region when $\tau=0$, whereas the MSE of mOLS increases in an even lower SNR region. Moreover, in the correlated dictionary,  early termination reduces MSE of both mOLS and m$^2$OLS in low SNR region, and increases it for high SNR region. 
\section{Conclusion}
In this paper we have proposed a greedy algorithm for sparse
signal recovery which preselects a few ($N$) possibly ``good''
indices according to correlation of the respective columns of the
measurement matrix with a residual vector, and then uses an mOLS
step to identify a subset of these indices (of size $L$) to be
included in the estimated support set. We have carried out a
theoretical analysis of the algorithm using RIP and have shown
that for the noiseless signal model, if the sensing matrix
satisfies the RIP condition
$\delta_{LK+N-L+1}<\frac{\sqrt{L}}{\sqrt{L}+\sqrt{L+K}}$, then the
m$^2$OLS algorithm is guaranteed to exactly recover a $K$ sparse
unknown vector, satisfying the measurement model, within $K$
steps. We further extended our analysis to a noisy measurement
setup and worked out bounds on the measurement SNR analytically,
which guarantees exact recovery of the support of the unknown
sparse vector within $K$ iterations. We have also presented the
computational steps of both mOLS and m$^2$OLS in a MGS based
efficient implementation and carried out a comparative analysis of
their computational complexities, which showed that m$^2$OLS
enjoys significantly reduced computational overhead compared to
mOLS, especially for large $n$ and / or small $K$. Finally,
through numerical simulations, we have verified that the
introduction of the preselection step indeed leads to less
computation time, and that the recovery performance of m$^2$OLS in
terms of recovery probability and number of iterations for success
is highly competitive with mOLS for a wide range of parameter
values.
\appendix
\section{Proofs of Theorem~\ref{thm:noiseless-recovery} and Theorem~\ref{thm:noisy-recovery}}
\label{sec:appendix-proof-thms}
%
%
\subsubsection{Success at the first iteration}
At the first iteration, the conditions for success are $S^1\cap T\ne \emptyset$, and $ h^1\cap T\ne \emptyset$.
In order to have these satisfied, we first observe the following:
\begin{lem}
\label{lem:lemma-observations-average-energy}
\begin{align}
\label{eq:obs1}
\frac{1}{\sqrt{N}}\|\bvec{\Phi}_{S^1}^t\bvec{y}\|_2 & \ge \frac{1}{\sqrt{K}}\|\bvec{\Phi}_{T}^t\bvec{y}\|_2,\quad (N\le K)\\
\label{eq:obs2}
\|\bvec{\Phi}_{S^1}^t\bvec{y}\|_2 & \ge \|\bvec{\Phi}_{T}^t\bvec{y}\|_2,\quad (N> K)\\
\label{eq:obs3}
\frac{1}{\sqrt{L}}\|\bvec{\Phi}_{T^1}^t\bvec{y}\|_2 & \ge \frac{1}{\sqrt{K}}\|\bvec{\Phi}_{T}^t\bvec{y}\|_2.
\end{align}
\end{lem}
\begin{proof}
The proof is given in Appendix~\ref{sec:appendix-proof-lemma-observation-average-energy}.
\end{proof}
Now, from~\eqref{eq:obs1} and \eqref{eq:obs2} above, \begin{align*}
\norm{\bvec{\Phi}_{S^1}^t\bvec{y}} & \ge \min\left\{1,\sqrt{\frac{N}{K}}\right\}\norm{\bvec{\Phi}_T^t\bvec{y}}\\
\ & =\min\left\{1,\sqrt{\frac{N}{K}}\right\}\norm{\bvec{\Phi}_T^t\bvec{\Phi}_T\bvec{x}_T+\bvec{\Phi}_T^t\bvec{e}}\\
\ & \ge\min\left\{1,\sqrt{\frac{N}{K}}\right\}\left[\norm{\bvec{\Phi}_T^t\bvec{\Phi}_T\bvec{x}_T}-\norm{\bvec{\Phi}_T^t\bvec{e}}\right]\\
\ & \stackrel{(a)}{\ge}\min\left\{1,\sqrt{\frac{N}{K}}\right\}\left[(1-\delta_K)\norm{\bvec{x}}-\sqrt{1+\delta_K}\norm{\bvec{e}}\right],
\end{align*}
where the inequalities in step $(a)$ follow from Lemmas~\ref{lem:max-min-eigenvalues-phi_t*phi} and \ref{lem:upper-bound-phi^t-u}, respectively.
If $S^{1}\cap T=\emptyset$, then, \begin{align*}
\norm{\bvec{\Phi}_{S^1}^t\bvec{y}} &=\norm{\bvec{\Phi}_{S^1}^t\bvec{\Phi}_T\bvec{x}_T+\bvec{\Phi}_{S^1}^t\bvec{e}}\\
\ &\stackrel{(b)}{\le} \delta_{N+K}\norm{\bvec{x}}+\sqrt{1+\delta_N}\norm{\bvec{e}},
\end{align*}
where the inequalities in step $(b)$ follow from
Lemmas~\ref{lem:orthogonal-sparse1},
and~\ref{lem:upper-bound-phi^t-u}, respectively. Hence $S^1\cap
T\ne \emptyset$ is guaranteed if \begin{align}
\delta_{N+K}\norm{\bvec{x}}+\sqrt{1+\delta_N}\norm{\bvec{e}} & < \min\left\{1,\sqrt{\frac{N}{K}}\right\} \left[(1-\delta_K)\norm{\bvec{x}}-\sqrt{1+\delta_K}\norm{\bvec{e}}\right].
\label{eq:noisy-start-condition1}
\end{align}
%
Again, in a similar manner as above, \begin{align*}
\norm{\bvec{\Phi}^t_{T^1}\bvec{y}} & \ge \sqrt{\frac{L}{K}}\norm{\bvec{\Phi}^t_{T}\bvec{y}} = \sqrt{\frac{L}{K}}\norm{\bvec{\Phi}^t_T\bvec{\Phi}_T \bvec{x}_T + \bvec{\Phi}_T^t\bvec{e}}\\
\ & \ge \sqrt{\frac{L}{K}}\left[(1-\delta_K)\norm{\bvec{x}}-\sqrt{1+\delta_K}\norm{\bvec{e}}\right].
\end{align*}
If $T^1\cap T=\emptyset$, we have \begin{align}
\norm{\bvec{\Phi}_{T^1}^t\bvec{y}} &=\norm{\bvec{\Phi}_{T^1}^t\bvec{\Phi}_T\bvec{x}_T+\bvec{\Phi}_{T^1}^t\bvec{e}}\nonumber\\
\label{eq:noisy-start-condition2}
\ &\le \delta_{L+K}\norm{\bvec{x}}+\sqrt{1+\delta_L}\norm{\bvec{e}}.
\end{align}
Hence, given that $S^1\cap T\ne \emptyset$, $T^1\cap T\ne \emptyset$ is guaranteed, if  \begin{align}
\label{eq:noisy-start-condition3}
\delta_{L+K}\norm{\bvec{x}}+\sqrt{1+\delta_L}\norm{\bvec{e}} & <\sqrt{\frac{L}{K}}\left[(1-\delta_K)\norm{\bvec{x}}-\sqrt{1+\delta_K}\norm{\bvec{e}}\right].
\end{align}

Since, $N\ge L$ and $K\ge L$ (by assumption), we have
$\delta_{N+K}\ge \delta_{L+K}$, and $\sqrt{\frac{L}{K}}\le
\min\left\{1,\sqrt{\frac{N}{K}}\right\}$. Therefore, a sufficient
condition for simultaneous satisfaction of
~\eqref{eq:noisy-start-condition1} and
~\eqref{eq:noisy-start-condition3} (i.e., for success at first
iteration) can be stated as follows:
\begin{align*}
\delta_{N+K}\norm{\bvec{x}}+\sqrt{1+\delta_N}\norm{\bvec{e}} & <\sqrt{\frac{L}{K}}\left[(1-\delta_K)\norm{\bvec{x}}-\sqrt{1+\delta_K}\norm{\bvec{e}}\right],
\end{align*}
or, equivalently,
\begin{align}
\label{eq:step1-equivalent-condition}
\Leftrightarrow \norm{\bvec{x}}\left(\sqrt{L}(1-\delta_K)-\sqrt{K}\delta_{N+K}\right) & >\norm{\bvec{e}}\left(\sqrt{K}\sqrt{1+\delta_N}+\sqrt{L}\sqrt{1+\delta_K}\right).
\end{align}
Note that as the RHS of \eqref{eq:step1-equivalent-condition} is positive, satisfaction of the above first requires the LHS to be positive.
\begin{itemize}
\item Noiseless case: For this, we have $\bvec{e=0}$. The inequality~\eqref{eq:step1-equivalent-condition} then leads to   \begin{align*}
\delta_{N+K}<\sqrt{\frac{L}{K}}(1-\delta_K).
\end{align*}
Since, $\delta_K< \delta_{N+K}$, the above is satisfied if the following condition holds:
\begin{align}
\delta_{N+K} & <\sqrt{\frac{L}{K}}(1-\delta_{K+N})\nonumber\\
\Leftrightarrow \delta_{N+K} & <\sqrt{L}/\left(\sqrt{L}+\sqrt{K}\right).
\label{eq:recovery-condition-first-step-noiseless}
\end{align}
\item Noisy case (i.e. $\norm{\bvec{e}}>\bvec{0}$): For this,
first the LHS of ~\eqref{eq:step1-equivalent-condition} must be
positive which is guaranteed
under~\eqref{eq:recovery-condition-first-step-noiseless}. Subject
to this, we need to condition the ratio
$\frac{\norm{\bvec{x}}}{\norm{\bvec{e}}}$ appropriately so
that~\eqref{eq:step1-equivalent-condition} is satisfied. Note that
since $\delta_{N+K}\ge \max\{\delta_N,\ \delta_K\}$,
~\eqref{eq:step1-equivalent-condition} is ensured under the
following condition:
 \begin{align*}
\norm{\bvec{x}}\left(\sqrt{L}(1-\delta_{N+K})-\sqrt{K}\delta_{N+K}\right) & >\norm{\bvec{e}}\sqrt{1+\delta_{N+K}}\left(\sqrt{K}+\sqrt{L}\right).
\end{align*}
 The above leads to the following condition on $\frac{\norm{\bvec{x}}}{\norm{\bvec{e}}}$
 for the first iteration to be successful under noisy observation

 \begin{align}
\label{eq:recovery-condition-first-step-noisy}
\frac{\norm{\bvec{x}}}{\norm{\bvec{e}}} & >\frac{\sqrt{1+\delta_{N+K}}(\sqrt{L}+\sqrt{K})}{\sqrt{L}-(\sqrt{L}+\sqrt{K})\delta_{N+K}}.
\end{align}
\end{itemize}

\subsubsection{Success at $(k+1)^{\mathrm{th}}$ iteration}
We assume that in each of the previous $k$ ($k<K$) iterations, at
least one correct index was selected, meaning, if $|T\cap
T^k|=c_k$, then $c_k\ge k$. Let $c_k<K$. Also define
$m_k:=|S^k\cap T\setminus T^k|,\ k\ge 1$, meaning, $m_i\ge 1,\ 1\le i\le k$.
For success of the $(k+1)^{th}$ iteration, we require $S^{k+1}
\cap T\setminus T^k\ne \emptyset$, and $ h^{k+1} \cap T\setminus T^k\ne \emptyset$
simultaneously, as this will ensure selection of at least one new true index at the $(k+1)$-th iteration.\\
\textbf{Condition to ensure $S^{k+1} \cap T\setminus T^k \ne
\emptyset$} : First consider the set $\support\setminus
(T\setminus T^k)$. If $|\support\setminus (T\setminus T^k)|< N$,
then, the condition $S^{k+1}\cap T\setminus T^k\ne \emptyset$ is
satisfied trivially. We therefore consider cases where
$\abs{\support\setminus (T\setminus T^k)}\ge N$, for which we
define the following:
\begin{itemize}
\item $W^{k+1}:=\displaystyle\argmax_{S\subset \support\setminus
(T\setminus T^k):\ |S|=N}\norm{\bvec{\Phi}_S^t\bvec{r}^k}$. \item
$\alpha_N^k:=\min_{i\in
W^{k+1}}\abs{\inprod{\bvec{\phi}_i}{\bvec{r}^k}}$. \item
$\beta_1^k:=\max_{i\in T\setminus
T^k}\abs{\inprod{\bvec{\phi}_i}{\bvec{r}^k}}$.
\end{itemize}
Clearly, $S^{k+1}\cap T\setminus T^k\ne \emptyset$, if $\beta_1^k>\alpha_N^k$.
It is easy to see that \begin{align*} {\alpha_N^k} & \le
\frac{\norm{\bvec{\Phi}_{W^{k+1}}^t\bvec{r}^k}}{\sqrt{N}}=\frac{\norm{\bvec{\Phi}_{W^{k+1}\setminus
T^k}^t\bvec{r}^k}}{\sqrt{N}},
\end{align*}
since $\bvec{r}^k$ is orthogonal to the columns of
$\bvec{\Phi}_{T^k}$. Now, using the derivation of ~\cite[Eq.(9)]{satpathi2013improving}, it is straightforward to derive that \begin{align}
\label{eq:rk_expression}
\bvec{r}^{k}&=\bvec{\Phi}_{T\cup T^k}\bvec{x}'_{T\cup
T^k}+\dualproj{T^k}\bvec{e},
\end{align}
where we have expressed the projection
$\proj{T^k}\bvec{\Phi}_{T\setminus T^k}\bvec{x}_{T\setminus T^k}$
as a linear combination of the columns of $\bvec{\Phi}_{T^k}$,
i.e., as $\bvec{\Phi}_{T^k}\bvec{u}_{T^k}$ for some
$\bvec{u}_{T^k}\in \real^{Lk}$, and,
 \begin{align*}
\bvec{x}'_{T\cup T^k}=\begin{bmatrix}
\bvec{x}_{T\setminus T^k} \\
-\bvec{u}_{T^k}
\end{bmatrix}.
\end{align*}
Then, using the expression for $\bvec{r}^k$ from Eq.~\eqref{eq:rk_expression} and using steps similar to the analysis for~\cite[Eq.(26)]{li2015sufficient}, it follows that 
\begin{align}
\label{eq:alpha-inequality}\alpha_N^k \le & \frac{1}{\sqrt{N}}\left(\delta_{N+Lk+K-c_k}\norm{\bvec{x}'_{T\cup T^k}}+\sqrt{1+\delta_{N}}\norm{\bvec{e}}\right).
\end{align}
On the other hand, \begin{align*}
\beta_1^k & \ge \frac{1}{\sqrt{K-c_k}}\norm{\bvec{\Phi}_{T\setminus T^k}^t\bvec{r}^k}.
\end{align*}
 Again, using the expression for $\bvec{r}^k$ from Eq.~\eqref{eq:rk_expression} and steps similar to the analysis for~\cite[Eq.(25)]{li2015sufficient} it follows that, \begin{align}
\beta_1^k  \ge & \frac{(1-\delta_{Lk+K-c_k})\norm{\bvec{x}'_{T\cup T^k}} - \sqrt{1+\delta_{Lk+K-c_k}}\norm{\bvec{e}}}{\sqrt{K-c_k}}
\label{eq:beta-inequality} 
 \end{align}
 Then, from \eqref{eq:alpha-inequality} and \eqref{eq:beta-inequality}, it follows that $S^{k+1}\cap T\ne \emptyset$ if \begin{align}
\frac{\left((1-\delta_{LK-L+1})\norm{\bvec{x}'_{T\cup T^k}}-\sqrt{1+\delta_{LK-L+1}}\norm{\bvec{e}}\right)}{\sqrt{K-c_k}} & > \frac{\left(\delta_{N+LK-L+1}\norm{\bvec{x}'_{T\cup T^k}}+\sqrt{1+\delta_{N}}\norm{\bvec{e}}\right)}{\sqrt{N}}.
 \label{eq:alpha-vs-beta-inequality-recovery}
 \end{align}
 where we have used the fact that $1\le k\le c_k$  and $k\le K-1$, implying, $Lk+K-c_k\le
(L-1)k+K\le (L-1)(K-1)+K=LK-L+1$, and the monotonicity of the RIC.

\textbf{Condition to ensure $h^{k+1} \cap T\setminus T^k\ne
\emptyset$} : First consider the set $S^{k+1}\setminus (T\setminus
T^k)$. If $|S^{k+1}\setminus (T\setminus T^k)|<L$, then the
condition $h^{k+1} \cap T\setminus T^k\ne \emptyset$ is satisfied
trivially. Therefore, we consider cases where $|S^{k+1}\setminus
(T\setminus T^k)|\ge L$. Then, using the definition of $a_i,\;i\in
S^{k+1}$ as given in Lemma~\ref{lem:identification}, we define the
following :
 \begin{itemize}
 \item $\displaystyle V^{k+1}=\argmax_{S\subset S^{k+1}\setminus (T\setminus T^k):|S|=L}\sum_{i\in S}a_i$.
 \item $\displaystyle u_1^k:=\max_{i\in S^{k+1}\cap T\setminus T^k}a_i\equiv \max_{i\in S^{k+1}\cap T}a_i$.
 \item $\displaystyle v_L^k=\min_{i\in V^{k+1}}a_i$.
 \end{itemize}
 From Lemma~\ref{lem:identification}, $u_1^k>v_L^k$ will ensure $h^{k+1}\cap T\setminus T^k\ne \emptyset$.
 Now, \begin{align*}
 u_1^k &=\max_{i\in S^{k+1}\cap T}a_i=\max_{i\in (S^{k+1}\cap T)\setminus \tilde{T}^K}a_i\\
 \ &\ge \max_{i\in (S^{k+1}\cap T)\setminus \tilde{T}^K}\abs{\inprod{\bvec{\phi}_i}{\bvec{\bvec{r}^k}}}~(\mathrm{since}\ \norm{\dualproj{T^k}\bvec{\phi_i}}\le\norm{\bvec{\phi_i}}=1)\\
 \ & \ge \max_{i\in T}\abs{\inprod{\bvec{\phi}_i}{\bvec{\bvec{r}^k}}} ~(\mathrm{from\; the\; definition\; of\;}\   S^{k+1}\ \mathrm{and}\ \tilde{T}^K)\\
 \ &\ge \frac{\norm{\bvec{\Phi}_{T\setminus T^k}^t\bvec{r}^k}}{\sqrt{K-c_k}}~(\mathrm{since} \inprod{\bvec{\phi}_i}{\bvec{\bvec{r}^k}}=0\; \mathrm{for}\; i\in T^k).
 \end{align*}
Now, recalling that $ \bvec{r}^k=\bvec{\Phi}_{T\cup
T^k}\bvec{x}'_{T\cup T^k}+\dualproj{T^k}\bvec{e}$ and using steps similar to those used for \cite[Eq.(25)]{li2015sufficient}
we have,
\begin{align*}
\norm{\bvec{\Phi}_{T\setminus T^k}^t\bvec{r}^k} & \ge (1-\delta_{Lk+K-c_k})\norm{\bvec{x}'_{T\cup T^k}}-\sqrt{1+\delta_{Lk+K-c_k}}\norm{\bvec{e}}\\
 \ & \ge (1-\delta_{LK-L+1})\norm{\bvec{x}'_{T\cup T^k}}-\sqrt{1+\delta_{LK-L+1}}\norm{\bvec{e}},
\end{align*}
 Thus, \begin{align}
 \label{eq:u_1-lower-bound}
 u_1^k & \ge \frac{1}{\sqrt{K-c_k}}\left[(1-\delta_{LK-L+1})\norm{\bvec{x}'_{T\cup T^k}}-\sqrt{1+\delta_{LK-L+1}}\norm{\bvec{e}}\right].
 \end{align}
 On the other hand \begin{align}
v_L^k=&  \min_{i\in V^{k+1}}a_i\nonumber\\
\ \le & \frac{1}{\sqrt{L}}\sqrt{\sum_{i\in V^{k+1}}a_i^2}\nonumber\\
\ \le & \frac{1}{\sqrt{L}}\sqrt{\sum_{i\in S^{k+1}\setminus (T\setminus T^k)}a_i^2}\nonumber \quad (\because V^{k+1}\subset S^{k+1}\setminus (T\setminus T^k))\\
\ = & \frac{1}{\sqrt{L}}\sqrt{\sum_{i\in S^{k+1}\setminus T}a_i^2}\nonumber \\
\label{eq:v_L-intermediate-inequality} \ \le &
\frac{\frac{1}{\sqrt{L}}\norm{\bvec{\Phi}_{S^{k+1}\setminus
T}^t\bvec{r}^k}}{\min_{i\in S^{k+1}\setminus (T\cup
\tilde{T}^K)}\|\dualproj{T^k}\bvec{\phi}_i\|_2}.
\end{align}
Now, $\bvec{\phi}_i\; \forall i\in H$ can be written as
$\bvec{\Phi}\bvec{\nu}_i$, where $\bvec{\nu}_i$ is the $i$-th
column of the $n\times n$ identity matrix. Then, noting that
$supp(\bvec{\nu}_i)=\{i\}$ with $|\{i\}|=1$, for $i\in
S^{k+1}\setminus (T\cup \tilde{T}^K)$,
\begin{align}
\norm{\dualproj{T^k}\bvec{\phi}_i}^2 & =\norm{\bvec{A}_{T^k}\bvec{\nu}_i}^2\nonumber\\
\ &\stackrel{\mathrm{Lemma}~\ref{lem:orthogonal-sparse2}}{\ge} \left(1-\left(\frac{\delta_{Lk+1}}{1-\delta_{Lk+1}}\right)^2\right)\norm{\bvec{\phi}_i}^2\nonumber\\
\label{eq:A_Tk-inequlity} \ &\ge
\left(1-\left(\frac{\delta_{LK-L+1}}{1-\delta_{LK-L+1}}\right)^2\right),
\end{align}
since, $\norm{\bvec{\phi}_i}=1$ and $k\le K-1$ (note that
application of Lemma 3.5 requires $\delta_1<1$, which is trivially
satisfied by the proposed sufficient condition
\eqref{eq:final-condition2}). Also, using the expression for $\bvec{r}^k$ from Eq.~\eqref{eq:rk_expression} and using steps similar to the ones used to obtain~\cite[Eq.(26)]{li2015sufficient}, we obtain
\begin{align*}
 \norm{\bvec{\Phi}_{S^{k+1}\setminus T}^t\bvec{r}^k} &\le \delta_{Lk+K+N-m_{k+1}-c_k}\norm{\bvec{x}'_{T\cup T^k}}+\sqrt{1+\delta_{N-m_{k+1}}}\norm{\bvec{e}}\\
 \ &\le \delta_{N+LK-L+1}\norm{\bvec{x}'_{T\cup T^k}}+\sqrt{1+\delta_{N+LK-L+1}}\norm{\bvec{e}},
 \end{align*}
 Then, noting that $\delta_{LK-L+1}<\delta_{LK+N-L+1}$,  \begin{align}
 \label{eq:v_L-inequality}
 v_L^k < \frac{\delta_{LK+N-L+1}\norm{\bvec{x}'_{T\cup T^k}}+\sqrt{1+\delta_{LK+N-L+1}}\norm{\bvec{e}}}{\sqrt{L\left(1-\left(\frac{\delta_{LK+N-L+1}}{1-\delta_{LK+N-L+1}}\right)^2\right)}}.
 \end{align}
In order to ensure that the denominator of the RHS of above
remains real, we need $\delta_{LK+N-L+1}<1/2$. This is seen to be
satisfied trivially by the proposed sufficient condition
\eqref{eq:final-condition2}. For brevity, let us also denote
$LK+N-L+1$ by $R$.

 From Eq.~\eqref{eq:u_1-lower-bound}, and Eq.~\eqref{eq:v_L-inequality}, a sufficient condition
 to ensure $h^{k+1}\cap T\ne \emptyset$ is given by \begin{align}
 {\frac{1}{\sqrt{K-c_k}}\left[(1-\delta_{R})\norm{\bvec{x}'_{T\cup T^k}}-\sqrt{1+\delta_{R}}\norm{\bvec{e}}\right]} & \ge\frac{\delta_{R}\norm{\bvec{x}'_{T\cup
T^k}}+\sqrt{1+\delta_{R}}\norm{\bvec{e}}}{\sqrt{L\left(1-\left(\frac{\delta_{R}}{1-\delta_{R}}\right)^2\right)}}.
\label{eq:suff-cond-kth-step-temp1}
\end{align}
Thus, from Eq~\eqref{eq:alpha-vs-beta-inequality-recovery} and
Eq~\eqref{eq:suff-cond-kth-step-temp1}, a sufficient condition for
success at the $(k+1)^{\mathrm{th}}$ iteration will be as follows
:
\begin{align}
{\frac{1}{\sqrt{K-c_k}}\left[(1-\delta_{R})\norm{\bvec{x}'_{T\cup T^k}}-\sqrt{1+\delta_{R}}\norm{\bvec{e}}\right]} & \ge \max\left\{\frac{1}{\sqrt{N}},\frac{1}{\sqrt{L\left(1-\left(\frac{\delta_{R}}{1-\delta_{R}}\right)^2\right)}}\right\}\nonumber\\
\ & \times \left(\delta_{R}\norm{\bvec{x}'_{T\cup
T^k}}+\sqrt{1+\delta_{R}}\norm{\bvec{e}}\right).
\end{align}
Since
$L\left(1-\left(\frac{\delta_{R}}{1-\delta_{R}}\right)^2\right)<L\le
N$, the above sufficient condition for success at the $k+1$-th
step boils down to the following :
\begin{align}
{\frac{1}{\sqrt{K-c_k}}\left[(1-\delta_{R})\norm{\bvec{x}'_{T\cup T^k}}-\sqrt{1+\delta_{R}}\norm{\bvec{e}}\right]} & \ge \frac{\delta_{R}\norm{\bvec{x}'_{T\cup
T^k}}+\sqrt{1+\delta_{R}}\norm{\bvec{e}}}{\sqrt{L\left(1-\left(\frac{\delta_{R}}{1-\delta_{R}}\right)^2\right)}}.
\label{eq:suff-cond-kth-step-temp}
\end{align}
We now derive sufficient conditions for success at
$k^{\mathrm{th}}$ step, $(k\ge 2)$, in the noiseless and noisy
measurement scenarios.
\begin{itemize}
\item For the noiseless case, putting $\bvec{e=0}$ in both sides
of the inequality in Eq~\eqref{eq:suff-cond-kth-step-temp}, we
obtain a sufficient condition for success in the noiseless case
as: \begin{align*} \frac{1}{\sqrt{K-c_k}}(1-\delta_{R}) &
\ge\frac{\delta_{R}}{\sqrt{L\left(1-\left(\frac{\delta_{R}}{1-\delta_{R}}\right)^2\right)}}.
\end{align*}
Using $\gamma:=\frac{\delta_{R}}{1-\delta_{R}}$, the above condition is seen to be satisfied if the following holds:\begin{align}
\sqrt{L(1-\gamma^2)} & >\gamma\sqrt{K-c_k}\nonumber\\
\Leftrightarrow\gamma & <\sqrt{\frac{L}{L+K-c_k}}\nonumber\\
\Leftrightarrow \delta_{LK+N-L+1} & <\frac{\sqrt{L}}{\sqrt{L}+\sqrt{L+K-c_k}}
\label{eq:recovery-condition-noiseless-k>2-temp}.
\end{align}
The above condition is ensured for all $k\ge 2$, if the following condition is satisfied, \begin{align}
\label{eq:recovery-condition-noiseless-k>2}
\delta_{LK+N-L+1} & <\frac{\sqrt{L}}{\sqrt{L}+\sqrt{L+K}}(<1/2).
\end{align}

\item For the noisy case, Eq.~\eqref{eq:suff-cond-kth-step-temp} is satisfied if the following is satisfied:
\begin{align}
 \label{eq:u-vs-v-inequlaity-recovery}
 \frac{\norm{\bvec{x}'_{T\cup T^k}}}{\norm{\bvec{e}}} 
 & \ge \frac{\sqrt{(1+\gamma)(1+2\gamma)}\left(\sqrt{K-c_k}+\sqrt{L(1-\gamma^2)}\right)}{\sqrt{L(1-\gamma^2)}-\gamma\sqrt{K-c_k}},
 \end{align}
 with the condition in Eq.~\eqref{eq:recovery-condition-noiseless-k>2-temp} assumed to hold.
The above lower bound can be simplified further by noting that
\begin{align*}
&RHS\;of\;(24)\;<&\\
\lefteqn{\sqrt{(1+\gamma)(1+2\gamma)}\frac{\sqrt{K}+\sqrt{L(1-\gamma^2)}}{\sqrt{L(1-\gamma^2)}-\gamma\sqrt{K}}} & &\\
\ & = \sqrt{\frac{1}{1-\delta_R}\cdot \frac{1+\delta_R}{1-\delta_R}}\cdot\frac{\frac{\sqrt{K}(1-\delta_R) + \sqrt{L(1-2\delta_R)}}{1-\delta_R}}{\frac{\sqrt{L(1-2\delta_R)} - \delta_R\sqrt{K}}{1-\delta_R}}\\
\ & =  \frac{\sqrt{1+\delta_{R}}}{1-\delta_R}\frac{\sqrt{K}(1-\delta_R) + \sqrt{L(1-2\delta_R)}}{\sqrt{L(1-2\delta_R)} - \delta_R\sqrt{K}}\\
\ & <\frac{ \sqrt{1+\delta_{R}}(\sqrt{K}+
\sqrt{L})}{\sqrt{L(1-2\delta_R)} - \delta_R\sqrt{K}} ,
\end{align*}
since $\sqrt{L(1-2\delta_R)}< \sqrt{L}(1-\delta_R)$.
 Thus, a modified condition for success at the $(k+1)^{th}$ iteration which also implies \eqref{eq:u-vs-v-inequlaity-recovery}
 is given by\begin{align}
 \label{eq:recovery-condition-noisy-k>2-temp}
 \frac{\norm{\bvec{x}'_{T\cup T^k}}}{\norm{\bvec{e}}} & >\frac{ \sqrt{1+\delta_{R}}(\sqrt{K}+ \sqrt{L})}{\sqrt{L(1-2\delta_R)} - \delta_R\sqrt{K}}.
 \end{align}
 Next, from the definition of $\kappa$ (section IV), \begin{align*}
 \norm{\bvec{x}'_{T\cup T^k}} & \ge \norm{\bvec{x}_{T\setminus T^k}}\ge |T\setminus T^k|\min_{j\in T}\abs{x_j}\\
 \ & = \norm{\bvec{x}}\cdot \kappa \cdot \sqrt{\frac{K-c_k}{K}}>\frac{\norm{\bvec{x}}\cdot \kappa}{\sqrt{K}},
 \end{align*}
 since $\min_{j\in T\setminus T^k}\abs{x_j}\ge \min_{j\in T}\abs{x_j}$ and $c_k<K$.
Combining with Eq.~\eqref{eq:recovery-condition-noisy-k>2-temp},
we obtain a sufficient condition for successful recovery at the
$k$-th step, $k\ge 2$ in the noisy measurement scenario as
    \begin{align}
 \label{eq:recovery-condition-noisy-k>2}
 \frac{\norm{\bvec{x}}}{\norm{\bvec{e}}}>\frac{\sqrt{1+\delta_{R}}(\sqrt{K}+ \sqrt{L})
 \sqrt{K}}{\kappa(\sqrt{L(1-2\delta_R)} - \delta_R\sqrt{K})},
 \end{align}
 \end{itemize}
 along with the condition in Eq~\eqref{eq:recovery-condition-noiseless-k>2}.
 \subsubsection{Condition for overall success}
 The condition for overall success is obtained by combining the conditions for
 success for $k=1$ and for $k\ge 2$, and is given below.\\
$\bullet$ For the noiseless scenario, a sufficient condition for
overall success has to comply with both the conditions in
Eq~\eqref{eq:recovery-condition-first-step-noiseless} and
Eq~\eqref{eq:recovery-condition-noiseless-k>2}. Since $R - (N+K) =
(L-1)(K-1)\ge 0$, as both $L,\ K$ are positive integers, we see
that the condition in
Eq~\eqref{eq:recovery-condition-noiseless-k>2} implies the
condition in
Eq~\eqref{eq:recovery-condition-first-step-noiseless}. Thus the
condition in Eq~\eqref{eq:recovery-condition-noiseless-k>2} serves
as a sufficient condition for overall success in noiseless
scenario. This proves Theorem~\ref{thm:noiseless-recovery}.\\
$\bullet$ For the noisy case, the conditions given by
\eqref{eq:recovery-condition-first-step-noisy} and
\eqref{eq:recovery-condition-noisy-k>2}, along with the conditions
given by \eqref{eq:recovery-condition-first-step-noiseless}, and
\eqref{eq:recovery-condition-noiseless-k>2} are sufficient. Of
these, we have already seen that
\eqref{eq:recovery-condition-noiseless-k>2} implies
\eqref{eq:recovery-condition-first-step-noiseless}. On the other
hand, it is easy to check that the numerator of the RHS of
~\eqref{eq:recovery-condition-noisy-k>2} is larger than that of
the RHS of \eqref{eq:recovery-condition-first-step-noisy}.
Further,
\begin{align*}
(1-2\delta_{LK+N-L+1})-(1-\delta_{N+K})^2 & =-\delta_{N+K}^2+2(\delta_{N+K}-\delta_{N+LK-L+1})<0,
\end{align*}
which implies that the denominator of the RHS of
\eqref{eq:recovery-condition-noisy-k>2} is smaller than that of
the RHS of ~\eqref{eq:recovery-condition-first-step-noisy}.
Moreover, by definition, $\kappa<1$. The overall implication of
these is that the condition in
\eqref{eq:recovery-condition-noisy-k>2} implies the condition in
~\eqref{eq:recovery-condition-first-step-noisy}. Finally, noting
that $\norm{\bvec{\Phi x}}\le
\sqrt{1+\delta_{K}}\norm{\bvec{x}}<\sqrt{1+\delta_{LK+N-L+1}}\norm{\bvec{x}}$,
the condition stated in Theorem~\eqref{thm:noisy-recovery}, along
with the condition in Theorem~\eqref{thm:noiseless-recovery} are
sufficient for overall successful recovery. This proves
Theorem~\ref{thm:noisy-recovery}.
\section{Proof of Lemma~\ref{lem:lemma-observations-average-energy}}
\label{sec:appendix-proof-lemma-observation-average-energy}
\begin{proof}

Let $N\le K$. Then, according to the definition of $S^1$(with $\bvec{r}^0$ given by $\bvec{y}$), we have for all $ \Lambda\subset T$ such that $|\Lambda|=N$, \begin{align*}
\norm{\bvec{\Phi}^t_{S^1}\bvec{y}}^2\ge & \norm{\bvec{\Phi}^t_{\Lambda}\bvec{y}}^2.
\end{align*}
Since there are $\binom{K}{N}$ such subsets of $T$, labelled, $\Lambda_i,\ 1\le i\le \binom{K}{N}$, we have \begin{align}
\label{eq:average-energy-inequality}
\binom{K}{N}\norm{\bvec{\Phi}^t_{S^1}\bvec{y}}^2\ge \sum_{i=1}^{\binom{K}{N}}\norm{\bvec{\Phi}^t_{\Lambda_i}\bvec{y}}^2.
\end{align}
Now, take any $j\in T$, and note that it appears in one of the $\Lambda_i$'s in exactly $\binom{K-1}{N-1}$ different ways. Thus, from the summation in Eq.~\eqref{eq:average-energy-inequality}, we find, \begin{align*}
\binom{K}{N}\norm{\bvec{\Phi}^t_{S^1}\bvec{y}}^2\ge \binom{K-1}{N-1}\norm{\bvec{\Phi}^t_{T}\bvec{y}}^2\\
\implies \norm{\bvec{\Phi}^t_{S^1}\bvec{y}}^2\ge \frac{N}{K}\norm{\bvec{\Phi}^t_{T}\bvec{y}}^2,
\end{align*}
from which Eq.~\eqref{eq:obs1} follows.

Now, let $N>K$. Then, we can take any subset $\Sigma\subset
\{1,2,\cdots,\ n\}$, such that $|\Sigma|=N$ and $T\subset \Sigma$.
Then, from definition, $\norm{\bvec{\Phi}^t_{S^1}\bvec{y}}^2\ge
\norm{\bvec{\Phi}^t_{\Sigma}\bvec{y}}^2\ge
\norm{\bvec{\Phi}^t_{T}\bvec{y}}^2$ from which Eq.~\eqref{eq:obs2}
follows.

To prove Eq.~\eqref{eq:obs3}, first note that
$\bvec{T^0}=\emptyset$ and thus, $\dualproj{T^0\cup
\{i\}}\bvec{y}=\dualproj{\{i\}}\bvec{y}=\bvec{y}-\frac{\inprod{\bvec{y}}
{\bvec{\phi}_i}}{\norm{\bvec{\phi}_i}^2}\bvec{\phi}_i=\bvec{y}-\inprod{\bvec{y}}{\bvec{\phi}_i}\bvec{y}$
(since $\norm{\bvec{\phi}}=1$), which means,
$\norm{\dualproj{T^0\cup
\{i\}}\bvec{y}}^2=\inprod{\dualproj{T^0\cup
\{i\}}\bvec{y}}{\bvec{y}}=\norm{\bvec{y}}^2-\abs{\inprod{\bvec{y}}{\bvec{\phi}_i}}^2$.
This means that $T^1$ consists of indices corresponding to the
largest $L$ absolute values
$\abs{\inprod{\bm{\phi}_i}{\bvec{y}}}^2$, for $i\in S^1$. But
since $S^1$ consists of indices corresponding to the $N$ largest
absolute values $\abs{\inprod{\bm{\phi}_i}{\bvec{y}}}^2$  with
$i\in \{1,2,\cdots,\ n\}=:\support$, and since $N\ge L$, we have,
$\min_{i\in T^1}\abs{\inprod{\bm{\phi}_i}{\bvec{y}}}^2\ge
\max_{i\in \support\setminus
T^1}\abs{\inprod{\bm{\phi}_i}{\bvec{y}}}^2$. Since, $L\le K$, for
each $\Gamma\subset T$, such that $|\Gamma|=L$, we have
\begin{align*} \norm{\bvec{\Phi}^t_{T^1}\bvec{y}}^2\ge
\norm{\bvec{\Phi}^t_\Gamma\bvec{y}}^2
\end{align*}
Since there are $\binom{K}{L}$ such subsets, we can write \begin{align*}
\binom{K}{L}\norm{\bvec{\Phi}^t_{T^1}\bvec{y}}^2\ge \sum_{\Gamma:\Gamma\subset T,\ |\Gamma|=L}\norm{\bvec{\Phi}^t_\Gamma\bvec{y}}^2
\end{align*}
Now any index $i\in T$ is contained in exactly $\binom{K-1}{L-1}$ of such $L$ cardinality subsets. Hence \begin{align*}
\binom{K}{L}\norm{\bvec{\Phi}^t_{T^1}\bvec{y}}^2\ge\binom{K-1}{L-1}\norm{\bvec{\Phi}^t_T\bvec{y}}^2,
\end{align*}
from which Eq.~\eqref{eq:obs3} follows.
\end{proof}
\bibliography{joint-omp-ols}
\end{document}